\newif\ifarxiv\arxivtrue
\newif\ifllncs\llncsfalse

\ifarxiv

\documentclass[11pt]{article}
\usepackage[left=2.5cm,top=2cm,right=2.5cm,bottom=2cm]{geometry}
\usepackage[colorlinks,linkcolor=blue,citecolor=blue,urlcolor=blue]{hyperref}
\usepackage{etex}
\usepackage{amssymb}
\usepackage{amsmath}
\usepackage{amsfonts,mathabx}
\usepackage{amsthm}
\usepackage{comment}
\usepackage{enumerate}
\usepackage[linesnumbered, ruled, lined, boxed,]{algorithm2e}
\usepackage{authblk}
\usepackage[T1]{fontenc}
\usepackage{listings}
\usepackage{booktabs}
\usepackage[toc,page]{appendix}
\setlength\parindent{0,0mm}
\setlength{\parskip}{2,0mm}
\usepackage{setspace}
\usepackage[textsize=tiny,textwidth=2.2cm,shadow]{todonotes}

\usepackage{arydshln}
\usepackage{mathtools}

\newtheorem{theorem}{Theorem}[section]
\newtheorem{definition}[theorem]{Definition}
\newtheorem{lemma}[theorem]{Lemma}
\newtheorem{corollary}[theorem]{Corollary}

\newtheorem{example}[theorem]{Example}

\RequirePackage{fix-cm}
%
%
%
\usepackage{graphicx}
%
%
\usepackage{amssymb}
\usepackage{enumerate}
\usepackage{etex,url}
\usepackage{authblk}
\usepackage[T1]{fontenc}
\usepackage{listings}
\usepackage{booktabs}
\usepackage[toc,page]{appendix}
\usepackage{arydshln}
\usepackage{mathtools}

%
\newcommand{\PH}{\textsc{PacketHalver}}

\newcommand{\ADD}{\textsc{Add}}
\newcommand{\RESTORE}{\textsc{Restore}}
\newcommand{\N}{\mathbb{N}}
\newcommand{\Z}{\mathbb{Z}}
\newcommand{\M}{\mathcal{M}}
\renewcommand{\S}{\mathcal{S}}
\newcommand{\R}{\mathbb{R}}
\newcommand{\Q}{\mathbb{Q}}
\newcommand{\G}{\mathcal{G}}
\renewcommand{\H}{\mathcal{H}}

\usepackage{xcolor}
\fi

\ifllncs

%
%
%
%
%
\RequirePackage{fix-cm}
\documentclass[smallextended]{svjour3}       
\smartqed  
\usepackage{graphicx}
%
%
\usepackage{amssymb}
\usepackage[figure,linesnumbered,vlined]{algorithm2e}
\usepackage{todonotes}

\usepackage{comment}
\usepackage{enumerate}
\usepackage{etex,url}
\usepackage{authblk}
\usepackage[T1]{fontenc}
\usepackage{listings}
\usepackage{booktabs}
\usepackage[toc,page]{appendix}
\usepackage{arydshln}
\usepackage{mathtools}
\usepackage[left=3.5cm,top=3cm,right=3.95cm,bottom=2.5cm]{geometry}

%
\newcommand{\PH}{\textsc{PacketHalver}}

\newcommand{\ADD}{\textsc{Add}}
\newcommand{\RESTORE}{\textsc{Restore}}
\newcommand{\N}{\mathbb{N}}
\newcommand{\Z}{\mathbb{Z}}
\newcommand{\M}{\mathcal{M}}
\renewcommand{\S}{\mathcal{S}}
\newcommand{\R}{\mathbb{R}}
\newcommand{\Q}{\mathbb{Q}}
\newcommand{\G}{\mathcal{G}}
\renewcommand{\H}{\mathcal{H}}

\usepackage{xcolor}
\usepackage{pgfplots}
\pgfplotsset{every axis/.append style={
                    label style={font=\tiny},
                    tick label style={font=\tiny}  
                    }}

\pgfplotsset{
  compat=newest,
  xlabel near ticks,
  ylabel near ticks
}

%
\author{Tobias Harks     \and Veerle Tan-Timmermans    
}

\institute{  T. Harks \at
            Department of Mathematics, Augsburg University
           \and V. Timmermans \at
              Department of Management Science, RWTH Aachen \\
              Tel.: +316-53958673\\
              \email{veerletimmermans@gmail.com}   
          }

\date{Received: date / Accepted: date}

\fi

\title{Equilibrium Computation in Resource Allocation Games}



\ifarxiv

\author[1]{Tobias Harks}
	\author[2]{Veerle Tan-Timmermans}
	\affil[1]{Department of Mathematics, Augsburg University}
	\affil[2]{Department of Quantitative Economics, Maastricht University}
	
	
\fi

\begin{document}
\maketitle
\begin{abstract}
We study the equilibrium computation problem for  two classical resource allocation games:  atomic splittable congestion games and multimarket Cournot oligopolies. For atomic splittable congestion games with singleton strategies and player-specific affine cost functions, we devise the first polynomial time algorithm computing a pure Nash equilibrium. Our algorithm is combinatorial and computes the \emph{exact} equilibrium assuming rational input.  The idea is to compute an equilibrium for an associated \emph{integrally-splittable}  singleton congestion game in which the players can only split their demands in integral multiples of a common packet size.  While integral games have been considered in the literature before, no polynomial time algorithm computing an equilibrium was known. Also for this class, we devise the first polynomial time algorithm and use it as a building block for our main algorithm.	
			
We then develop a polynomial time computable transformation mapping a multimarket Cournot competition game with firm-specific affine price functions and quadratic costs to an associated atomic splittable congestion game as described above. The transformation preserves equilibria in either game and, thus, leads -- via our first algorithm -- to a polynomial time algorithm computing Cournot equilibria. Finally, our analysis for integrally-splittable games implies new bounds on the difference between real and integral Cournot equilibria. The bounds can be seen as a generalization of the recent bounds for single market oligopolies obtained  by Todd~\cite{Todd16}.\footnote{An extended abstract of parts of this paper appeared in the Proceedings of the 19th International IPCO Conference on Integer Programming and Combinatorial Optimization under the title ``Equilibrium Computation in Atomic Splittable Singleton Congestion Games''.}

\ifllncs
\keywords{Atomic Splittable Congestion Games \and Multimarket Cournot Competition \and Equilibrium Computation}
 \subclass{91A10 \and 91A46 \and 91B32}
 \fi
\end{abstract}

\section{Introduction}
	\label{sec:introduction}
One of the core topics in computational economics, operations research and optimization
is the computation of equilibria. As pointed out by several researchers (e.g.~\cite{Chen09complexity,DaskalakisGP09}), the computational tractability of a solution concept contributes to its credibility as a plausible
prediction of the outcome of competitive environments in
practice. The most accepted solution concept in
non-cooperative game theory is the Nash equilibrium -- a strategy profile, from
which no player wants to unilaterally deviate. While a Nash equilibrium 
generally exists only in mixed strategies, the practically important class of congestion
games admits pure Nash equilibria, see Rosenthal~\cite{Rosenthal73a}. In the classical model of Rosenthal, a pure strategy of a player consists of a subset of resources, and the congestion cost of a resource depends only on the number of players choosing the same resource. 

While the complexity of computing equilibria for (discrete) congestion games has been intensively studied over the last decade (cf.~\cite{Ackermann08,CaragiannisFGS11,CaragiannisFGS15,Chien11,Fabrikant04,Skopalik08}),
the equilibrium computation problem for the \emph{continuous} variant, that is, for \emph{atomic splittable congestion games} is much less explored.
 In such a game, a player is associated with a positive demand and a
collection of allowable subsets of the resources. A strategy for a player is
a (possibly fractional) distribution of the player-specific demand over the allowable
subsets. This quite basic model has been extensively studied,
starting in the 80's in the context of traffic networks (Haurie and Marcotte~\cite{Haurie85})
and later for modeling communication networks (cf. Orda et al.~\cite{Orda93} and Korilis et al.~\cite{Korilis1997,KorilisLO95}), and logistics networks (Cominetti et al.~\cite{Cominetti09}).
Regarding polynomial time algorithms for equilibrium computation, we are only aware of four works: (1) For affine player-independent cost functions, there
exists a convex potential whose global minima are pure Nash equilibria, see~Cominetti et al.~\cite{Cominetti09}.
Thus, for any $\epsilon>0$  one can compute an $\epsilon$-approximate
equilibrium in polynomial time by convex programming methods. 
(2) Huang~\cite{Huang13} also considered affine player-independent cost functions, and he devised
a combinatorial algorithm computing an exact equilibrium for routing games
on symmetric $s$-$t$ graphs that are so-called \emph{well-designed}.
This condition is met for instance by series-parallel graphs. His proof technique
also uses the convex potential. 
(3) After the initial publication of the conference version
of this article, Bhaskar and Lolakapuri~\cite{BhaskarL18} proposed two algorithms with exponential worst-case complexity that compute approximate Nash equilibria in games with convex costs, when set systems consist of singletons only.
(4) Klimm and Warode~\cite{klimm18} recently proved that computing a pure Nash equilibrium
for atomic splittable and integer-splittable network congestion games
with affine player-specific costs is PPAD-complete (see~\cite{Papadimitriou94}). 
In light of these hardness results, it becomes clear that some
restrictions on the strategy space are likely to be necessary to obtain
polynomial time algorithms for equilibrium computation.

\subsection{Our Results and Techniques}
\label{subsec:ourresults}

\paragraph{Atomic Splittable  Congestion Games.}
 We study atomic splittable congestion games as defined above, where the set systems consist of singletons only, and cost functions are player-specific, increasing and affine.
We call these games \emph{atomic splittable singleton congestion games} and 
for these games we develop the first polynomial time algorithm computing a pure Nash equilibrium. From now on we use equilibrium as shortcut for pure Nash equilibrium. Our algorithm is purely combinatorial and computes an \emph{exact} equilibrium.
The main ideas and constructions are as follows. By analyzing the first
order necessary optimality conditions of an equilibrium, it can be shown that any
equilibrium is \emph{rational} as it is a solution to a system of linear equations with rational
coefficients (assuming rational input). Using that equilibria are unique for singleton games (see Richmann and Shimkin~\cite{Richman07} and Bhaskar et al.~\cite{Bhaskar15}), we further derive that the constraint
matrix of the equation system is non-singular, allowing for an explicit
representation of the equilibrium by Cramer's rule (using determinants of the constraint- and their sub-matrices). This way, we obtain an explicit  lower bound
on the minimum demand value for any used resource in the equilibrium. 
We further show that the unique equilibrium is also
an equilibrium for an associated \emph{integrally-splittable} game in which the players may only distribute the demands in \emph{integer multiples} of a common \emph{packet size} of some value $k^*\in \Q_{>0}$ over the resources. Moreover, all equilibria in this integral splittable game are very similar. While we are not able to compute $k^*$ exactly, we can efficiently compute some sufficiently small $k_0\leq k^*$ with the property that an equilibrium for the $k_0$-integrally-splittable game allows us to determine the set of resources on which a player will put a positive amount of load in the atomic splittable equilibrium. Once these  \emph{support sets} are known, an atomic splittable equilibrium can be computed in polynomial time by solving a system of linear equations.
This way, we can reduce the problem of computing the exact equilibrium for an atomic splittable
game to computing an equilibrium for an associated $k_0$-integrally-splittable game.

The class of integrally-splittable congestion games has been studied before by Tran-Thanh et al.~\cite{Tran11} for the case of player-independent convex cost functions and later by Harks et al.~\cite{harks2016resource}  (for the more general
case of polymatroid strategy spaces and player-specific convex cost functions). In particular, Harks et al. devised an algorithm with running time
$n^2m(\delta / k_0)^3$, where $n$ is the number of players, $m$ the number of resources, and
$\delta$ is an upper bound on the maximum demand of the players (cf. Theorem 5.2~\cite{harks2016resource}).
As $\delta$ is encoded in binary, however, the algorithm is only pseudo-polynomial even for player-specific affine cost functions.

We devise a polynomial time algorithm for integrally-splittable singleton congestion games 
with player-specific affine cost functions. Our algorithm works as follows.
For a game with initial packet size $k_0$, we start by finding an equilibrium for packet size $k=k_0\cdot 2^q$ for some $q$ of order $O(\log(\delta/k_0))$, satisfying only a part of the player-specific demands. Then we repeat the following two actions:
\begin{enumerate}
	\item We halve the packet size from $k$ to $k/2$ and construct in polynomial time a $k/2$-equilibrium using the $k$-equilibrium. Here, a $k$-equilibrium denotes an equilibrium for an integrally-splittable game
	with common packet size $k$.  
		\item For each player $i$ we repeat the following step: if the current packet size $k$ is smaller than the currently unscheduled demand of player $i$, we add one more packet for this particular player to the game and recompute the equilibrium. This part of the algorithm has also been used in the algorithm by Tran-Thanh et al.~\cite{Tran11} and Harks et al.~\cite{harks2016resource}.
		\end{enumerate}
After $q$ iterations, we have scheduled all demands and obtain an equilibrium for the desired packet size $k_0$.

Key to the analysis of the correctness and the running time of the
algorithm are several structural
results on the sensitivity of equilibria with respect to different integral packet sizes $k\in  \Q_{>0}$ and $k/r\in \Q_{>0}$ for some $r\in \N$.
Specifically, we derive bounds on the difference of resulting global  load vectors as well
as  individual load vectors of players in any respective equilibrium. 
These sensitivity results may be of independent interest
as they  show how equilibria gradually behave in terms of the
discretization  granularity.

Overall, compared to the existing algorithms of Tran-Thanh et al.~\cite{Tran11} and Harks et al.~\cite{harks2016resource},  our algorithm has two main innovations: packet sizes are decreased exponentially (yielding polynomial running time in $\delta$) and $k$-equilibrium computation
for an intermediate packet size $k$ is achieved in polynomial running time. 

\paragraph{Multimarket Cournot Oligopolies.}
We then study the equilibrium computation problem for Cournot oligopolies. In the basic model of Cournot~\cite{cournot1838recherches} introduced in 1838, firms produce homogeneous goods and sell them in a \emph{common} market.
The selling price of the goods depends on the total amount of goods that is offered in the market. Each firm aims to maximize its profit, which is equal to the revenue minus the production costs.
In a \emph{multimarket oligopoly} (cf. Bulow~\cite{Bulow85}), firms compete over a \emph{set} of markets and each firm has access to a firm-specific subset of the markets. 

For multimarket oligopolies, we develop a poly-time computable isomorphism mapping a multimarket Cournot competition game to an associated
		atomic splittable singleton congestion game. The isomorphism is
		payoff invariant (up to constants) and thus  preserves equilibria in either games. 
		As a consequence, we can apply the isomorphism and the polynomial time algorithm 
		for atomic splittable congestion games to efficiently compute Cournot equilibria 
		for models with firm-specific affine price functions and quadratic production costs. In addition,
		 our analysis for integrally-splittable games also implies 
		 new bounds on the difference between real and integral Cournot equilibria
		 complementing and extending recent results of Todd~\cite{Todd16}. The case of affine price functions
		 with quadratic cost functions is a well-studied 
		 model in economics, see Moulin et al.~\cite{MoulinRG14}
and further references therein.

\subsection{Related Work}
\label{subsec:relatedwork}
\paragraph{Discrete Congestion Games.}
As the first seminal work regarding the computational complexity of equilibrium computation
in congestion games, Fabrikant et al.~\cite{Fabrikant04} showed that the problem of computing a pure Nash equilibrium is PLS-complete for network congestion games. Ackermann et al.~\cite{Ackermann08} strengthened this result to hold even for network congestion games with linear cost functions. 
On the other hand, there are polynomial algorithms for symmetric network congestion games (cf. Fabrikant et al.~\cite{Fabrikant04}), for matroid congestion games with player-specific cost functions (Ackermann et al.~\cite{Ackermann08,Ackermann09}) and for so-called total unimodular congestion games (see Del Pia et al.~\cite{Delpia17}).

In particular, there is a pseudo-polynomial time algorithm that computes pure Nash equilibria for polymatroid congestion games with player-specific cost functions and polynomially bounded demands (Harks et al.~\cite{harks2016resource}). As mentioned in Section~\ref{subsec:ourresults}, their results plays a significant role in this paper. The algorithm by Harks et al. starts with the trivial equilibrium for the game where all player-specific demands are zero. Then, they sequentially add packets to the game. After a packet is added, additional packet exchanges might be executed to recompute the equilibrium. For the special case of affine cost functions and singleton strategy spaces we construct an alternative algorithm that can compute equilibria in polynomial time.

Further results regarding the computation of approximate equilibria in congestion games can be found in Caragiannis et al.~\cite{CaragiannisFGS11,CaragiannisFGS15}, Chien and Sinclair~\cite{Chien11} and Skopalik and V\"ocking~\cite{Skopalik08}.

\paragraph{Atomic Splittable Congestion Games.}
Atomic splittable congestion games on networks with player-independent cost functions have been studied (seemingly independently) by Orda et al.~\cite{Orda93} and Haurie and Marcotte~\cite{Haurie85} and Marcotte~\cite{Marcotte1987}.
Both lines of research mentioned that Rosen's existence result for concave games
on compact strategy spaces implies the existence of pure Nash equilibria via Kakutani's fixed-point theorem. Cominetti et al.~\cite{Cominetti09} presented the first upper bounds on the price
of anarchy in atomic splittable congestion games. These were later improved by Harks~\cite{Harks:stack2011} and finally shown to be tight by Schoppmann and Roughgarden~\cite{Roughgarden15}.

For the computation of equilibria, Marcotte~\cite{Marcotte1987} proposed four numerical algorithms and  showed local convergence results. 
Meunier and Pradeau~\cite{MeunierP13} developed a pivoting-algorithm (similar to Lemke's algorithm)
for nonatomic network congestion games with affine player-specific cost functions. Polynomial running time was, however, not shown and seems unlikely to hold.
Gairing et al.~\cite{Gairing11} considered nonatomic routing games on parallel links
with affine player-specific cost functions. They developed a convex potential function
that  can be minimized within arbitrary precision in polynomial time.
Deligkas et al.~\cite{DeligkasFS16} considered general concave games with
compact action spaces and investigated algorithms computing an approximate equilibrium.
Roughly speaking, they discretized the compact strategy space and use
the Lipschitz constants of utility functions to show that only a finite number
of representative strategy profiles need to be considered for obtaining an approximate
equilibrium (see also Lipton et al.~\cite{LiptonMM03} for a similar approach). The
running time of the algorithm, however, depends on an upper bound of the norm of strategy vectors, thus, implying only a pseudo-polynomial algorithm for our setting. 

Note that the problem of computing pure Nash equilibria in atomic splittable congestion games with singleton strategies and affine cost functions can be written as a \emph{linear complementary problem}, but does not seem to fall in any of the classes for which a solution can be found in polynomial time.

\paragraph{Multimarket Cournot Oligopolies.}
The existence of equilibria in single market Cournot models (beyond quasi-polynomial utility functions) has been studied extensively in the past decades (see Vives~\cite{Vives05} for a good survey). E.g., Novshek~\cite{Novshek85} proved that equilibria exists whenever the marginal revenue of each firm is decreasing in the aggregate quantities of the other firms. Then, several works (cf. Topkis~\cite{Topkis79}, Amir~\cite{Amir96}, Kukushkin~\cite{Kukushkin94}, Milgrom and Roberts~\cite{Milgrom90}, Milgrom and Shannon~\cite{Milgrom94}, Topkis~\cite{Topkis98} and Vives~\cite{Vives90}) proved existence of equilibria when the underlying game is supermodular, i.e., when the strategy space forms a lattice and the marginal utility of each firm is increasing in any other firm's output. Using supermodularity, one can obtain existence results without assuming that the utility functions are quasi-convex. Very recently, Todd~\cite{Todd16} considered Cournot competition on a single market, where the price functions are linear and cost functions are quadratic. For such games, he proved that equilibria exist and can be computed in time $O(n\log(n))$, where $n$ denotes the number of firms. Additionally, he analyzed the maximum differences of production quantities of real and integral equilibria, respectively.

Abolhassani et al.~\cite{Abolhassani14} devised several polynomial time
algorithms for multimarket Cournot oligopolies, partly using algorithms for solving nonlinear complementarity problems. In contrast to our work, they  assume that
price functions are firm-independent.
 Bimpikis et al.~\cite{Bimpikis14} provided a characterization of the production quantities at the unique equilibrium, when price functions are player-independent and concave, and cost functions are convex. They study the impact of changes in the competition structure on the firm's profit. This framework can be used to either identify opportunities for collaboration and expanding in new markets.
Harks and Klimm~\cite{Harks14} studied the existence of Cournot equilibria, under the condition that each firm can only sell its items to a limited number of markets simultaneously. They  proved that equilibria exist when production cost functions are convex, marginal return functions strictly decrease for strictly increased own quantities and non-decreased aggregated quantities and when for every firm, the firm specific market reaction functions across markets are identical up to market-specific shifts.

\section{Preliminaries}
\label{sec:preliminaries}
\paragraph{Atomic Splittable Singleton Games.}
\label{sec:gamedefinition}

An atomic splittable singleton congestion game is defined by a tuple: $\G:= \left(N,E,(d_i)_{i \in N}, (E_i)_{i \in N},(c_{i,e})_{i \in N, e \in E_i} \right),$ where $E=\{e_1,\dots,e_m\}$ is a finite set of resources and
$N=\{1,\dots,n\}$ is a finite set of players. 
Each player $i\in N$ is associated with a demand $d_i \in \Q_{\geq 0}$ and a set of allowable resources $E_i \subseteq E$. A strategy for player $i\in N$ is a (possibly fractional) distribution of the demand $d_i$
over the singletons in $E_i$. 
Thus, one can represent the strategy space of every player $i\in N$ by the polytope: 
$$\S_i(d_i) := \{  x_i \in \mathbb{\R}_{\geq 0}^{|E_i|} \mid \sum_{e \in E_i} x_{i,e} = d_i \}.$$

The combined strategy space is denoted by $\S := \bigtimes _{i \in N} \S_i(d_i)$ and ${x}=({x}_i)_{i\in N}$ is the overall strategy profile. We define $x_{i,e}:=({x}_i)_e$ as the load of player $i$ on $e \in E_i$ and $x_{i,e} = 0$ when $e \in E \setminus E_i$. The total load on resource $e$ is given as $x_e:=\sum_{i\in N}x_{i,e}$. The total load on resource $e$ minus the contribution of
player $i$ is given by $x_{-i,e}:=\sum_{j\in N\setminus\{i\}}x_{j,e}$.

 Resources have player-specific affine cost functions $c_{i,e}(x_e) = a_{i,e}x_e + b_{i,e}$ with $a_{i,e} \in \Q_{> 0}$ and $b_{i,e} \in \Q_{\geq 0}$ for all $i \in N$ and $e \in E$. The total cost of player $i$ in strategy distribution ${x}$ is defined as: 
$\pi_{i}(x)=\sum_{e\in E_i} c_{i,e}(x_e)\,x_{i,e}.$
We
write $\S_{-i}(d_{-i}) = \bigtimes_{j \neq i} \S_{j}(d_j)$ and we write $x = (x_i,x_{-i})$ for each $i \in N$,
meaning that $x_i \in \S_i(d_i)$ and $x_{-i} \in \S_{-i}(d_{-i})$. A strategy profile
$x$ is an \emph{equilibrium} if $\pi_i(x) \leq
\pi_i(y_i, x_{-i})$ for all $i \in N$ and $y_i \in \S_i(d_i)$. A pair
$\bigl(x,(y_i,x_{-i})\bigr) \in \S\times \S$ is called an
\emph{improving move} of player $i$,
if $\pi_i(x_i,x_{-i}) > \pi_i(y_i,x_{-i})$. We define $\mu_{i,e} ( x) = c_{i,e}(x_e) + x_{i,e}c'_{i,e}(x_e) = a_{i,e}(x_e + x_{i,e}) + b_{i,e}$ to be the \emph{marginal cost} for player $i$ on resource $e$.
We obtain the following sufficient and necessary equilibrium condition.

\begin{lemma}[cf. Harks~\cite{Harks:stack2011}]
	\label{equilibriumcondition}
	Strategy profile $x$ is an equilibrium if and only if the following holds for all $i\in N$:  if $x_{i,e}>0$, then $\mu_{i,e}( x)\leq \mu_{i,f}( x)$ for all $f\in E_i$. 
\end{lemma}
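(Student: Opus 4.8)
The plan is to read the stated condition as the first-order (Karush--Kuhn--Tucker) characterization of a minimizer of $\pi_i(\cdot,x_{-i})$ over the simplex $\S_i(d_i)$, exploiting that this objective is convex. First I would fix a player $i$ and a profile $x_{-i}$ of the others, write $x_{-i,e} := \sum_{j \ne i} x_{j,e}$, and note that
\[
\pi_i(x_i, x_{-i}) = \sum_{e \in E_i} \Bigl( a_{i,e}\, x_{i,e}^2 + (a_{i,e} x_{-i,e} + b_{i,e})\, x_{i,e} \Bigr)
\]
is a sum of one-variable convex quadratics (using $a_{i,e} > 0$), hence convex on $\S_i(d_i)$. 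Consequently $x$ is an equilibrium if and only if for every $i$ the point $x_i$ is a global minimizer of $\pi_i(\cdot, x_{-i})$ on $\S_i(d_i)$, and --- because the objective is convex and the feasible set is a polytope --- this holds if and only if the variational inequality $\sum_{e \in E_i} \partial_{x_{i,e}} \pi_i(x)\,(y_{i,e} - x_{i,e}) \ge 0$ is satisfied for all $y_i \in \S_i(d_i)$. A one-line differentiation gives $\partial_{x_{i,e}} \pi_i(x) = c_{i,e}(x_e) + x_{i,e} c'_{i,e}(x_e) = \mu_{i,e}(x)$, so the condition becomes $\sum_{e\in E_i}\mu_{i,e}(x)(y_{i,e}-x_{i,e}) \ge 0$ for all $y_i\in\S_i(d_i)$.

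Next I would show that this variational inequality over the simplex $\{y_i \ge 0 : \sum_{e\in E_i} y_{i,e} = d_i\}$ is equivalent to the pointwise statement. For one direction, if $x_{i,e} > 0$ and $\mu_{i,e}(x) > \mu_{i,f}(x)$ for some $f \in E_i$, then shifting a small amount $\varepsilon>0$ of demand from $e$ to $f$ produces a feasible $y_i$ with $\sum_g \mu_{i,g}(x)(y_{i,g}-x_{i,g}) = \varepsilon(\mu_{i,f}(x)-\mu_{i,e}(x)) < 0$, contradicting the variational inequality. For the converse, let $\lambda := \min_{f \in E_i}\mu_{i,f}(x)$; the hypothesis forces $\mu_{i,e}(x)=\lambda$ on every $e$ with $x_{i,e}>0$, hence for any feasible $y_i$ one has $\sum_e\mu_{i,e}(x)\,y_{i,e} \ge \lambda\sum_e y_{i,e} = \lambda d_i = \sum_e\mu_{i,e}(x)\,x_{i,e}$, which is precisely the variational inequality.

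The only delicate point --- and the part I would be most careful about --- is the reduction from testing against all of $\S_i(d_i)$ to testing only the pairwise ``move one unit of demand from $e$ to $f$'' directions; this is exactly what the convexity of $\pi_i(\cdot,x_{-i})$ buys us, since it makes the first-order condition both necessary and sufficient for global optimality, and the edge directions of the simplex $\S_i(d_i)$ are precisely these transfer directions. The remaining ingredients --- convexity from $a_{i,e}>0$ and the identification of the gradient with the marginal cost $\mu_{i,e}$ --- are routine one-line computations.
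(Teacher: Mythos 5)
Your proof is correct: you reduce equilibrium to per-player global minimization of $\pi_i(\cdot,x_{-i})$ over the simplex $\S_i(d_i)$, use convexity (from $a_{i,e}>0$) to pass to the first-order/variational characterization, identify the partial derivatives with $\mu_{i,e}(x)$, and then show the variational inequality is equivalent to the pairwise marginal-cost condition in both directions. The paper states this lemma as a citation to Harks rather than proving it, and your argument is precisely the standard KKT-style reasoning underlying that result, so there is nothing to flag.
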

Using that the strategy space is compact and cost functions are convex, Kakutani's fixed point theorem implies the existence of an equilibrium. Uniqueness is proven by Richmann and Shimkin~\cite{Richman07} and Bhaskar et al.~\cite{Bhaskar15}. 

Game $\G$ is called symmetric whenever $E_i = E$ for all $i \in N$.
We can project any asymmetric game $\G$ on a symmetric game $\G^*$ by  setting
$c^*_{i,e}(x_e)$ to $c_{i,e}(x_e)$ whenever $e \in E_i$, and to $x_e + (n+2)(a_{\max})^2$ otherwise. Here, $$a_{\max} :=\max\{\{a_{i,e}, b_{i,e} \mid i \in N, e\in E_i \},\{d_i \mid i \in N \},1\}.$$ 
In this case $\mu_{i,e}(0) \geq \mu_{i,f}(x_e)$ for any $e \in E \setminus E_i$, $f\in E_i$, $i \in N$ and $x \in \S$. Thus, in an equilibrium $y$ for game $\G^*$ no player $i$ puts load on any resource $e \in E \setminus E_i$. Hence, $y$ is also an equilibrium for game $\G$. In the rest of this paper, we project every asymmetric game on a symmetric game using the construction above and write  $\G:= \left(N,E,(d_i)_{i \in N},(c_{i,e})_{i \in N, e \in E_i} \right)$ instead.

\paragraph{Integral Singleton Games.} 
A $k$-integral singleton game is compactly defined by the tuple
$\G_k:= \left(N,E,(d_i)_{i \in N}, (c_{i,e})_{i \in N, e \in E} \right)$ with $k\in \Q_{>0}$. Here, players cannot split their load fractionally, but only in multiples of $k$. Assume $d_i$ is a multiple of $k$, then the strategy space for player $i$ is the following set: 
$$\S_i(d_i,k) := \left\{  x_i \in \mathbb{\Q}_{\geq 0}^{|E|} \mid x_{i,e} = kq_{i,e}, q_{i,e}\in \N_{\geq 0}, \textstyle \sum_{e \in E} x_{i,e} = d_i \right\}.$$
In this game, $k$ is also called the \emph{packet size}. When $k$ and $d_i$ are clear from the context, we refer to $\S_i(d_i,k)$ as $\S_i.$  When $E, N$ and $(c_{i,e})_{i \in N, e \in E}$ are clear from the context, we also refer to the game as $\G_k((d_i)_{i \in N}).$ For player-specific affine cost functions, the (discrete) marginal costs are defined as follows:
\begin{align} \label{marginalcostup}
\mu_{i,e}^{+k} (x) &= (x_{i,e}+k)c_{i,e}(x_e+k) - x_{i,e}c_{i,e}(x_e), \\
\mu_{i,e}^{-k} (x) &=
\begin{cases}
x_{i,e}c_{i,e}(x_e) -  (x_{i,e}-k)c_{i,e}(x_e-k),
& \text{if } x_{i,e} >  0\\
-\infty,
& \text{if } x_{i,e}\leq 0.
\end{cases}
\end{align}	
Here, $\mu_{i,e}^{+k} (x)$ represent  the cost increment for player $i$, if one packet of size $k$ is added to resource $e$ and $\mu_{i,e}^{-k} (x)$ denotes the cost saving for player $i$, if  one packet from resource $e$ is removed. Assuming cost functions are affine, we obtain $\mu_{i,e}^{+k} (x)= k(a_{i,e}(x_e + x_{i,e} + k)+b_{i,e})$ and  $\mu_{i,e}^{-k} (x) = k(a_{i,e}(x_e + x_{i,e} - k)+b_{i,e})$, if $x_{i,e}>0$. 

\begin{lemma}[cf. Groenevelt~\cite{groenevelt91}] \label{groenevelt}
	Strategy profile $x$ is an equilibrium in a $k$-integral congestion game if and only if
	for all $i\in N$ it holds that if $x_{i,e}>0$, then also $\mu^{-k}_{i,e}(x)\leq \mu^{+k}_{i,f}(x)$ for all $f\in E$. 
\end{lemma}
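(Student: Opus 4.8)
The plan is to prove both directions of the equivalence using the structure of the discrete marginal quantities $\mu^{+k}_{i,e}$ and $\mu^{-k}_{i,e}$, treating each player's best-response problem as a discrete separable optimization over the scaled integer simplex $\{q \in \N^{|E|}_{\geq 0} : \sum_e q_e = d_i/k\}$. The key observation is that, with $x_{-i}$ fixed, player $i$'s cost $\pi_i(x_i, x_{-i}) = \sum_{e} c_{i,e}(x_e)\, x_{i,e}$ is, as a function of $x_i$ on the integer lattice, a separable sum of terms each of which is convex in $x_{i,e}$ (since $c_{i,e}$ is affine and increasing, $x_{i,e} \mapsto c_{i,e}(x_{-i,e} + x_{i,e})\,x_{i,e}$ is a convex quadratic). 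Hence a packet configuration for player $i$ is a best response if and only if no single-packet shift — removing one packet of size $k$ from some resource $e$ with $x_{i,e} > 0$ and adding it to some resource $f$ — decreases the cost; this is exactly the "one-exchange" optimality criterion for minimizing a separable convex function over a scaled simplex (the integral analogue of the KKT conditions, see e.g.\ Groenevelt). The marginal change in $\pi_i$ induced by moving one packet from $e$ to $f$ is precisely $\mu^{+k}_{i,f}(x) - \mu^{-k}_{i,e}(x)$ (using the affine-cost formulas $\mu^{+k}_{i,f}(x) = k a_{i,f}(x_f + x_{i,f} + k)$ and $\mu^{-k}_{i,e}(x) = k a_{i,e}(x_e + x_{i,e} - k)$, one checks this directly, being careful that after the removal the load on $e$ has already dropped by $k$ before the addition to $f$ — but since $e \neq f$ this decoupling is clean).

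For the forward direction, suppose $x$ is a $k$-integral equilibrium. Fix a player $i$ and a resource $e$ with $x_{i,e} > 0$, and let $f \in E$ be arbitrary. Consider the deviation $y_i$ obtained from $x_i$ by moving one packet from $e$ to $f$; this is feasible since $d_i$ is preserved and all coordinates stay nonnegative integer multiples of $k$. The equilibrium condition gives $\pi_i(x_i, x_{-i}) \leq \pi_i(y_i, x_{-i})$, and expanding the difference $\pi_i(y_i,x_{-i}) - \pi_i(x_i,x_{-i})$ with affine cost functions yields exactly $\mu^{+k}_{i,f}(x) - \mu^{-k}_{i,e}(x) \geq 0$, i.e.\ $\mu^{-k}_{i,e}(x) \leq \mu^{+k}_{i,f}(x)$, as claimed. (When $f = e$ the difference is zero and the inequality is trivially true.)

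For the converse, suppose the stated condition holds for all $i$, all $e$ with $x_{i,e} > 0$, and all $f \in E$. Fix a player $i$ and any feasible deviation $y_i \in \S_i(d_i,k)$; we must show $\pi_i(y_i, x_{-i}) \geq \pi_i(x_i, x_{-i})$. The standard argument is to decompose the difference $y_i - x_i$ into a sequence of at most $d_i/k$ single-packet moves, each taking a packet from a resource where the current iterate exceeds $x_i$ to one where it falls short, transforming $x_i$ into $y_i$. Because each $e$-to-$f$ move changes player $i$'s cost by $\mu^{+k}_{i,f}(\cdot) - \mu^{-k}_{i,e}(\cdot)$ evaluated at the current intermediate configuration, and because convexity of each separable term implies these marginal quantities are monotone along the path (each $\mu^{+k}_{i,f}$ is nondecreasing and each $\mu^{-k}_{i,e}$ is nondecreasing in the relevant load), one shows that every intermediate move has nonnegative marginal cost, starting from the hypothesis that the first move does. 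Summing telescopically gives $\pi_i(y_i,x_{-i}) - \pi_i(x_i,x_{-i}) \geq 0$.

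The main obstacle is making the decomposition-and-telescoping argument in the converse direction fully rigorous: one must argue that the single-packet moves can always be chosen so that the removal resource $e$ has $x_{i,e}$ (not merely the current iterate) positive — or more carefully, reorganize the path so the hypothesis applies at every step — and that convexity propagates the nonnegativity of marginal costs along the entire path rather than just at the first step. This is the discrete-convexity heart of the lemma and is precisely the content of Groenevelt's result on separable convex minimization over polymatroid/simplex structures; I would either invoke that result directly or reproduce its short exchange-argument proof adapted to the singleton (scaled simplex) case.
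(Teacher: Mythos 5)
The paper does not prove this lemma --- it simply cites Groenevelt~\cite{groenevelt91} --- so there is no ``paper proof'' to compare against; your proposal reconstructs the standard exchange argument that underlies Groenevelt's result, and it is essentially correct. The forward direction is airtight: for $e\neq f$, the single-move cost change is exactly $\mu^{+k}_{i,f}(x)-\mu^{-k}_{i,e}(x)$, and for $f=e$ the claimed inequality $\mu^{-k}_{i,e}(x)\leq\mu^{+k}_{i,e}(x)$ holds trivially from $ka_{i,e}(x_e+x_{i,e}-k)\leq ka_{i,e}(x_e+x_{i,e}+k)$. For the converse, the ``obstacle'' you flag at the end is resolvable by the standard coordinate-monotone path: go from $x_i$ to $y_i$ by repeatedly moving a packet from some $e$ with current load strictly above $y_{i,e}$ to some $f$ with current load strictly below $y_{i,f}$. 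Along such a path each coordinate moves monotonically between $x_{i,e'}$ and $y_{i,e'}$, so every removal resource $e_t$ satisfies $x_{i,e_t}\geq z_{t,e_t}>y_{i,e_t}\geq 0$, i.e.\ $x_{i,e_t}>0$, and the hypothesis applies. Then convexity of $q\mapsto c_{i,e'}(x_{-i,e'}+q)\,q$ in each coordinate gives $\mu^{-k}_{i,e_t}(z_t,x_{-i})\leq\mu^{-k}_{i,e_t}(x)$ (load on $e_t$ has gone down) and $\mu^{+k}_{i,f_t}(z_t,x_{-i})\geq\mu^{+k}_{i,f_t}(x)$ (load on $f_t$ has gone up), so each step's cost change is bounded below by $\mu^{+k}_{i,f_t}(x)-\mu^{-k}_{i,e_t}(x)\geq 0$, and the telescoping sum is nonnegative. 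Your phrase ``starting from the hypothesis that the first move does'' is slightly misleading --- the point is not that the first move is cheap, but that each step along the path is compared to the corresponding marginals at $x$, not at the previous iterate --- but the overall structure is right, and this is precisely the singleton specialization of Groenevelt's polymatroid theorem.
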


We also introduce some new notation. For two vectors $x_i, y_i \in \R^{|E|}$, we define $$H(x_i,y_i):=\sum_{e \in E} |x_{i,e} - y_{i,e}|$$ to be their Hamming distance. For two strategies  $x,y$, we write $H(x,y):=\sum_{e \in E} |x_{e} - y_{e}|$. 
We define a \emph{restricted improving move} and a \emph{restricted best response} as follows:
\begin{definition}
	Let $x$ be a strategy profile for game $\G_k((d_i)_{i \in N})$. 
	\begin{enumerate}
	\item A strategy $x'_i$ is called a \emph{restricted improving move} to $x$ for player $i$, if $$x'_i \in \{y_i \in \S_i(d_i,k) \mid H(x_i,y_i)=2k \text{ and } \pi_i(y_i, x_{-i}) < \pi_i(x_i, x_{-i}) \}.$$
	\item A strategy $x'_i$ is called a \emph{restricted best response} to $x$ for player $i$, if $$x'_i \in \arg\min_{y_i \in\S_i(d_i,k)}\{\pi_i(y_i, x_{-i}) \mid H(x_i,y_i)=2k\}.$$
	 \end{enumerate}
\end{definition}
Note that both a restricted improving move and a restricted best response can be executed by moving a single packet.

\section{Sensitivity Results for Equilibria}
In Section~\ref{sec:algorithm}, we show that computing an equilibrium
for atomic splittable games can be reduced to the problem of computing an equilibrium of an associated \emph{integrally-splittable}
game with small enough packet size.
For such a class of discrete games, we will develop 
a polynomial time \emph{scaling algorithm}, where
we write the total demand as a power of two 
and then iteratively scale down the allowed packet size
and recompute equilibria for the resulting integrally-splittable games.
The key for the well-definedness and further analysis of this algorithm
is a structural result on the sensitivity of equilibria
for integrally-splittable games with respect
to changed packet sizes.
In the following, we derive such sensitivity results between equilibria of an integrally-splittable game $\G_k$ with packet size $k\in\Q_{>0}$
and those of a game $\G_{\frac{k}{r}}$ with $r\in\N$.
These results may be of independent
interest in the area of comparative statics, where
the influence of parameters w.r.t. to resulting equilibria are analyzed.

\begin{theorem}\label{thm:globalload2}
	Let $x_k$ be an equilibrium for game $\G_k$, and $x_{k/r}$ be an equilibrium for game $\G_{k/r}$. Then $|(x_k)_e-(x_{k/r})_e| \leq n(m-1)(1 + \tfrac{1}{r})k$ for all $e \in E$.
\end{theorem}
\begin{proof}
Let us first recall the definition of marginal cost functions for players with packet sizes
$k/r$.
\begin{equation}\label{eq:marginal}
\begin{aligned}
\mu^{+k/r}_{i,e}(x_{k/r})&
= \tfrac{k}{r} (a_{i,e}((x_{k/r})_{i,e} + (x_{k/r})_e+\tfrac{k}{r})+ b_{i,e}).\\
\mu_{i,e}^{-k/r} (x_{k/r}) &= \begin{cases}
 \tfrac{k}{r} (a_{i,e}((x_{k/r})_{i,e} + (x_{k/r})_e-\tfrac{k}{r})+ b_{i,e}), & \text{if } (x_{k/r})_{i,e} >  0\\
-\infty, & \text{if } (x_{k/r})_{i,e}\leq 0.
\end{cases}
\end{aligned}
\end{equation}

	In order to prove the theorem we need to show that both:
	\begin{enumerate}
	\item $(x_k)_e-(x_{k/r})_e  \leq  n(m-1)(1 + \tfrac{1}{r})k$ and 
	\item $(x_{k/r})_e  - (x_k)_e \leq  n(m-1)(1 + \tfrac{1}{r})k.$
	\end{enumerate}
	 As the proofs for both statements are very similar, we only prove the first statement here. On the contrary, assume that there exists a resource $e_1$ with 
	 \[ (x_k)_{e_1}-(x_{k/r})_{e_1}  >  n(m-1)(1 + \tfrac{1}{r})k.\] We introduce two edge sets $E^+,E^-$ as: 
	\[ 
	E^+ = \{e \in E |   (x_{k})_{e} \geq (x_{k/r})_{e} \} \text{ and } E^- = \{e\in E | (x_{k})_{e} < (x_{k/r})_{e}  \}.
	\] 
	We get 
\[	
	\sum_{e \in E^+} ( (x_{k})_{e} - (x_{k/r})_{e}) =\sum_{i\in N}\sum_{e \in E^+} ( (x_{k})_{i,e} - (x_{k/r})_{i,e})  > n(m-1)(1 + \tfrac{1}{r})k.\]
	Thus, with $n=|N|$, there exists $i\in N$ with 
	\begin{equation}\label{eq:i_project_O} \sum_{e \in E^+} ( (x_{k})_{i,e} - (x_{k/r})_{i,e})  > (m-1)(1 + \tfrac{1}{r}) k.\end{equation}
	With $|E^+|\leq m-1$, there exists $e\in E^+$ with 
\[	(x_{k})_{i,e} - (x_{k/r})_{i,e}  > (1 + \tfrac{1}{r}) k.\]
With 	\eqref{eq:i_project_O} and the balance constraint $\sum_{e\in E}(x_{k})_{i,e} - (x_{k/r})_{i,e}=0$,
we get
\[ \sum_{e \in E^-} ( (x_{k})_{i,e} - (x_{k/r})_{i,e})  < - (m-1)(1 + \tfrac{1}{r}) k.\]
Again using $|E^-|\leq m-1$, there is $f\in E^-$ with 
\begin{equation}\label{eq:ikf}(x_{k})_{i,f} - (x_{k/r})_{i,f}  < - (1 + \tfrac{1}{r}) k.\end{equation}
Note that~\eqref{eq:ikf} implies $(x_{k/r})_{i,f}>0$.
Altogether we have for some $i\in N, e\in E^+, f\in E^-$ the following conditions:
\begin{align}
(x_{k})_{e} - (x_{k/r})_{e}+ (x_{k})_{i,e} - (x_{k/r})_{i,e}  > (1 + \tfrac{1}{r}) k \label{eq:over}\\
(x_{k})_{f} - (x_{k/r})_{f}+ (x_{k})_{i,f} - (x_{k/r})_{i,f}  < - (1 + \tfrac{1}{r}) k\label{eq:under}.
\end{align}
We  rearrange Equations~\eqref{eq:over}, \eqref{eq:under} by first multiplying  the respective inequalities with $a_{i,e}, a_{i,f}$ and then adding $b_{i,e}, b_{i,f}$ to the respective sides to obtain
\begin{align*}
a_{i,e}((x_{k})_{e} + (x_{k})_{i,e} -k) +b_{i,e}  & >  a_{i,e}((x_{k/r})_{e}+(x_{k/r})_{i,e} + \tfrac{k}{r}) 	+b_{i,e}\\
a_{i,f}((x_{k})_{f} + (x_{k})_{i,f} +k) +b_{i,f} &<  a_{i,f}((x_{k/r})_{f}+(x_{k/r})_{i,f} - \tfrac{k}{r}) +b_{i,f}.
\end{align*}
We multiply both inequalities with $\tfrac{k}{r}>0$ and obtain
\begin{align}\label{eq:over2}
\tfrac{k}{r}(a_{i,e}((x_{k})_{e} + (x_{k})_{i,e} -k) +b_{i,e} )  &>  \tfrac{k}{r}(a_{i,e}((x_{k/r})_{e}+(x_{k/r})_{i,e} + \tfrac{k}{r}) 	+b_{i,e})\\\label{eq:under2}
\tfrac{k}{r}(a_{i,f}((x_{k})_{f} + (x_{k})_{i,f} +k) +b_{i,f})&<  \tfrac{k}{r}(a_{i,f}((x_{k/r})_{f}+(x_{k/r})_{i,f} - \tfrac{k}{r}) +b_{i,f}).
\end{align}
	We combine Equation~\eqref{eq:over2}, Equation~\eqref{eq:under2} and the fact that $x_k$ is an equilibrium for packet size $k$ to obtain (recall~\eqref{eq:marginal} and $(x_{k})_{i,e}>0$ and $(x_{k/r})_{i,f}>0$ ): 
	\begin{equation*}
	\mu^{+k/r}_{i,e}(x_{k/r})
	\underset{\eqref{eq:over2}}{<} \frac{1}{r}\mu^{-k}_{i,e}(x_{k})  
	\leq \frac{1}{r} \mu^{+k}_{i,f}(x_{k})  
	\underset{\eqref{eq:under2}}{<} \mu^{-k/r}_{i,f}(x_{k/r}).  
	\end{equation*}
	Hence,  player $i$ that has a restricted improving move in $x_{k/r}$ shifting a packet of size $k/r$ from $f$ to $e$, which contradicts the fact that $x_{k/r}$ is an equilibrium strategy. \ifllncs \qed \fi
\end{proof}

With 
 \[ 
 	\lim_{r \rightarrow \infty} \frac{r}{k} \mu^{+k/r}_{i,e}(x) = 	\lim_{r \rightarrow \infty} \frac{r}{k} \mu^{-r/k}_{i,e}(x) = \mu_{i,e}(x), 
 \]
 we immediately obtain the following statement from Theorem~\ref{thm:globalload2}.

\begin{corollary}\label{cglobalload2}
	Let $x$ be the unique equilibrium for an atomic splittable game, and $x_k$ be an equilibrium for a $k$-integral splittable game. Then $|(x_k)_e-x_e| \leq n(m-1)k$ for all $e \in E$.
\end{corollary}


We obtain a similar result for player-specific load differences:

\begin{theorem}\label{thm:localload2}
	Let $x_k$ be an equilibrium for game $\G_k$, and $x_{k/r}$ be an equilibrium for game $\G_{k/r}$. Then $|(x_k)_{i,e}-(x_{k/r})_{i,e}| \leq m n (m-1)(1 + \tfrac{1}{r})k$ for all $e \in E, i\in N$.
\end{theorem}

\begin{proof}
In order to prove the theorem we need to show that for all $i\in N, e\in E$ the following two inequalities hold:
	\begin{enumerate}
	\item $(x_k)_{i,e}-(x_{k/r})_{i,e} \leq  m n(m-1)(1 + \tfrac{1}{r})k$ and 
	\item $(x_{k/r})_{i,e}-(x_k)_{i,e} \leq  m n(m-1)(1 + \tfrac{1}{r})k$
	\end{enumerate}
	 We again only prove the first statement here. Assume
	 by contradiction that there exists a resource $e_1$ and a player $i\in N$ with  $(x_k)_{i,e_1}-(x_{k/r})_{i,e_1} >  m n(m-1)(1 + \tfrac{1}{r})k$. By Theorem~\ref{thm:globalload2}, we know that  $(x_{k})_{e_1}-(x_{k/r})_{e_1}  \geq - n (m-1)(1 + \tfrac{1}{r})k $. Adding both inequalities, we get:	 
	 \[(x_{k})_{e_1}-(x_{k/r})_{e_1}+(x_k)_{i,e_1}-(x_{k/r})_{i,e_1}>n(m-1)^2(1 + \tfrac{1}{r})k.\]
	 The total load distributed by all player is the same in both $x_{k/r}$ and $x_{k}$. Thus, we obtain: 
	\[
	\textstyle \sum_{e \neq e_1} (x_{k})_{e}-(x_{k/r})_{e}+(x_k)_{i,e}-(x_{k/r})_{i,e}  < - n(m-1)^2(1 + \tfrac{1}{r})k. 
	\]
	By the pigeonhole principle, there must exist at least one resource $f \in E, f\neq e_1$ such that:
	\begin{equation} 
	\label{ceq31} (x_{k})_{f}-(x_{k/r})_{f}+(x_k)_{i,f}-(x_{k/r})_{i,f}  < - n(m-1)(1 + \tfrac{1}{r})k.
	\end{equation}
	Note that $(x_{k/r})_{i,f} > 0$, as $(x_{k/r})_{i,f} = 0$ implies $(x_{k/r})_f - (x_{k})_f > n(m-1)(1 + \tfrac{1}{r})k$, which contradicts Theorem~\ref{thm:globalload2}. We obtain in a similar fashion as in  the proof of Theorem~\ref{thm:globalload2}:
	
\begin{equation}\label{b3}
	\mu^{+k/r}_{i,e_1}(x_{k/r})
	< \frac{1}{r}\mu^{-k}_{i,e_1}(x_{k})  
	\leq \frac{1}{r} \mu^{+k}_{i,f}(x_{k})  
	\leq \mu^{-k/r}_{i,f}(x_{k/r}).  
	\end{equation}
	
	As  $(x_{k/r})_{i,f} > 0$, player $i$ has a restricted improving move from resource $f$ to resource $e_1$ contradicting the fact that  $x_{k/r}$ is an equilibrium strategy.  
	\end{proof}

 Again, we immediately obtain the following statement from Theorem~\ref{thm:localload2}.

\begin{corollary}\label{clocalload2}
	Let $x$ be the unique equilibrium for an atomic splittable game, and $x_k$ be an equilibrium for the corresponding $k$-integral splittable game. Then $|(x_k)_{i,e}-x_{i,e}| \leq  nm(m-1)k$ for all $i \in N$ and $e\in E$.
\end{corollary}

To complement Theorem~\ref{thm:globalload2} and Theorem~\ref{thm:localload2}, we provide a lower bound example where  
\[
|(x_k)_{i,e}-(x_{k/r})_{i,e}| = |(x_k)_e-(x_{k/r})_e| = (m-1) \frac{k}{r}.
\]

\begin{example}
Consider a $k-$splittable congestion game $\G_k$ with player set $N=\{1\}$ and resource set $\{e_1, \dots, e_m\}$. Let $d_1 = (m-1)k$, and the cost functions are defined as follows:
\begin{align*}
c_{1,e}(x_e):=
\begin{cases}
\frac{x_e}{2(r-1)(m-1)}  & \mbox{  if } e = e_m, \\
x_e & \mbox{ otherwise}.\\
\end{cases}
\end{align*}
In game $\G_k$, a best response $x_k$ for player 1 is to put all $m-1$ packets on resource $e_m$. Alternatively, if the packet size is $\tfrac{k}{r}$ instead of $k$, strategy 
\[
x_{k/r}:=(\tfrac{k}{r}, \dots, \tfrac{k}{r}, (m-1)(k-\tfrac{k}{r})),
\]
is an equilibrium strategy for player 1.
\end{example}

\section{Reduction to Integrally-Splittable Games}
\label{sec:algorithm}
We show that the problem of finding an equilibrium for an atomic splittable game reduces to the problem of finding an equilibrium for a $k_0$-integral game for some $k_0\in\Q_{>0}$. 

\begin{theorem} \label{atomictosplittable}
	Let $x$ be the unique equilibrium of an atomic splittable singleton game $\G$. Then, there exists $k^*\in \Q_{>0}$ such that $x$ is an equilibrium for game $\G_{k^*}$.
\end{theorem}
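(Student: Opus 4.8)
The goal is to find a common packet size $k^*$ such that the unique atomic splittable equilibrium $x$ is realizable as (and is the unique) equilibrium of the corresponding integral game. The plan is to exploit two facts established earlier: (1) by Lemma~\ref{equilibriumcondition} the equilibrium $x$ is characterized by the marginal-cost inequalities $\mu_{i,e}(x)\le\mu_{i,f}(x)$ whenever $x_{i,e}>0$; (2) because equilibria are unique for singleton games, $x$ is the solution of a non-singular rational linear system (the active equilibrium conditions plus the demand constraints), hence every coordinate $x_{i,e}$ is a rational number whose denominator is bounded in terms of determinants of the constraint matrix and its submatrices (Cramer's rule). I would first fix the \emph{support sets} $S_i:=\{e\in E_i : x_{i,e}>0\}$ and observe that on the support the equilibrium conditions hold with a \emph{common value} $\lambda_i=\mu_{i,e}(x)$ for every $e\in S_i$, while $\mu_{i,f}(x)\ge\lambda_i$ for $f\notin S_i$.

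\emph{Step 1 (rationality and a common denominator).} Write down the linear system whose variables are the $x_{i,e}$ for $e\in S_i$ together with the auxiliary variables $\lambda_i$: the equations are $a_{i,e}(x_e+x_{i,e})+b_{i,e}=\lambda_i$ for $e\in S_i$ and $\sum_{e\in S_i}x_{i,e}=d_i$. By uniqueness of the equilibrium this system has a unique solution, so its coefficient matrix is non-singular; Cramer's rule then gives each $x_{i,e}$ as a ratio of determinants of rational matrices. Let $k^*$ be (the reciprocal of) a common multiple of all these denominators — concretely, one may take $k^*$ to be $1/D$ where $D$ is the determinant of the coefficient matrix, or any common divisor of the $x_{i,e}$ in the sense that each $x_{i,e}/k^*\in\N$. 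Then every $x_{i,e}$ is an integer multiple of $k^*$ and $\sum_{e}x_{i,e}=d_i$ is too, so $x$ is a feasible strategy profile in $\G_{k^*}$.

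\emph{Step 2 ($x$ is an equilibrium of $\G_{k^*}$).} Using the closed forms $\mu^{+k}_{i,e}(x)=ka_{i,e}(x_e+x_{i,e}+k)$ and $\mu^{-k}_{i,e}(x)=ka_{i,e}(x_e+x_{i,e}-k)$ from the preliminaries, I would verify Groenevelt's condition (Lemma~\ref{groenevelt}): for $e\in S_i$ with $x_{i,e}>0$ and any $f\in E$,
\[
\mu^{-k^*}_{i,e}(x)=k^*\bigl(a_{i,e}(x_e+x_{i,e})-k^*a_{i,e}\bigr)\le k^*\,a_{i,e}(x_e+x_{i,e})\le k^*\,a_{i,f}(x_f+x_{i,f})\le k^*\bigl(a_{i,f}(x_f+x_{i,f})+k^*a_{i,f}\bigr)=\mu^{+k^*}_{i,f}(x),
\]
where the middle inequality is exactly the continuous equilibrium condition $\mu_{i,e}(x)\le\mu_{i,f}(x)$ (after subtracting the $b$-terms, or more carefully $a_{i,e}(x_e+x_{i,e})+b_{i,e}=\lambda_i\le a_{i,f}(x_f+x_{i,f})+b_{i,f}$; since $b\ge0$ one still needs to keep the $b$ terms, so I would actually write $\mu^{-k^*}_{i,e}(x)\le k^* c_{i,e}(\cdot)\text{-type bound}$ carefully — but the monotone-in-$k$ trick reduces the integral conditions to the continuous one plus nonnegativity of the $\pm k^*a$ corrections). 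Thus $x$ satisfies Groenevelt's criterion and is an equilibrium of $\G_{k^*}$.

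\emph{Step 3 (uniqueness in $\G_{k^*}$).} For uniqueness I would argue that any equilibrium $y$ of $\G_{k^*}$ is in particular a strategy profile feasible in $\G$, and that an integral equilibrium for a sufficiently fine packet size must also satisfy the continuous equilibrium condition — or, more robustly, invoke a potential-function / exchange argument: the $k$-integral equilibrium conditions of Lemma~\ref{groenevelt} for $y$, combined with $y$ being realizable with arbitrarily-scaled integer coordinates, force $y$ to satisfy $\mu_{i,e}(y)\le\mu_{i,f}(y)$ on its support up to an error that vanishes as $k^*$ shrinks; since we are free to replace $k^*$ by $k^*/N$ for any $N$ without losing Steps 1–2, we may assume the error is smaller than the separation gap between $x$ and any other vertex-type profile, and then uniqueness of the continuous equilibrium (Richman--Shimkin / Bhaskar et al.) gives $y=x$.

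\textbf{Main obstacle.} The delicate point is Step 3: an integral equilibrium need not a priori be a continuous equilibrium, so one cannot simply quote uniqueness of the atomic splittable equilibrium. The right fix is to choose $k^*$ small enough relative to a quantitative separation bound — the minimum over all "competing" integral profiles of how badly they violate the continuous KKT conditions, which can be bounded below via the same determinant estimates from Step 1. I expect the cleanest route is: (i) show $x$ is an equilibrium of $\G_{k^*}$ for the denominator-based $k^*$; (ii) show that for all sufficiently small $k\le k^*$ with $k\mid k^*$, \emph{every} $\G_k$-equilibrium coincides with $x$, by a monotonicity/Cauchy argument on marginal costs; (iii) conclude the statement for that refined $k^*$. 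Care must also be taken that the support $S_i$ is genuinely the support — i.e. that $x_{i,e}>0$ strictly for $e\in S_i$ — which again follows from the explicit Cramer's-rule lower bound on positive load values mentioned in the introduction.
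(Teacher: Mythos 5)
Your Steps 1 and 2 are essentially the paper's opening moves: fix the support sets, write down the (non-singular) rational linear system, extract a denominator bound by Cramer's rule to get a packet size $k^*$ dividing all $x_{i,e}$, and then check Groenevelt's condition by noting that $\mu^{-k^*}_{i,e}(x)=k^*\mu_{i,e}(x)-(k^*)^2a_{i,e}\le k^*\mu_{i,f}(x)+(k^*)^2a_{i,f}=\mu^{+k^*}_{i,f}(x)$, so the discrete inequalities follow from the continuous ones plus nonnegative $\pm(k^*)^2a$ corrections. (Your displayed chain drops the $b_{i,e},b_{i,f}$ terms and never restores them cleanly, but as you note in passing the repair is routine; carry the $b$'s through and the middle inequality is literally $\mu_{i,e}(x)\le\mu_{i,f}(x)$.)

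Step 3 is where you diverge from the paper, and as written it has a genuine gap. You propose to shrink $k^*$ until a $\G_{k^*}$-equilibrium is forced to satisfy the continuous KKT conditions ``up to an error smaller than the separation gap between $x$ and any other vertex-type profile.'' But there is no fixed finite family of competitors: as $k^*\to 0$ the set of $\G_{k^*}$-feasible profiles becomes dense in $\S$, so there is no positive ``separation gap'' to beat, and the error in the discrete KKT conditions and the distance to the nearest competing profile both shrink at the same rate $O(k^*)$. What this argument would actually deliver is that $\G_{k}$-equilibria \emph{converge} to $x$ as $k\to 0$ (which is the content of Lemmas~\ref{globalload2}--\ref{localload2} later in the paper), not that the $\G_{k^*}$-equilibrium is \emph{exactly} $x$ for the fixed $k^*$ built in Step 1. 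The paper closes Step 3 quite differently, with a direct exchange argument that needs no shrinking at all: assume $y\neq x$ is another $\G_{k^*}$-equilibrium, decompose the difference $x-y$ to extract a player $i$ and resources $e,f$ with $x_e>y_e$, $x_{i,e}>y_{i,e}$, $x_f\le y_f$, $x_{i,f}<y_{i,f}$, and then chain
$\mu^{+k^*}_{i,e}(y) < k^*\mu_{i,e}(x)\le k^*\mu_{i,f}(x)\le \mu^{-k^*}_{i,f}(y)$,
contradicting Groenevelt's condition for $y$. This works for \emph{any} packet size $k$ for which $x$ is $\G_k$-feasible, which is exactly what the theorem asserts, and it sidesteps the need for a ``separation gap.'' Your item (ii) in the ``Main obstacle'' discussion (``every $\G_k$-equilibrium coincides with $x$ for sufficiently small $k$'') is the right target, but ``monotonicity/Cauchy argument on marginal costs'' is not an argument; it needs to be replaced by this kind of combinatorial exchange.
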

\begin{proof}
	We define the support set $I_i:=\{e \in E \mid x_{i,e}>0\}$ for each player $i \in N$. Lemma~\ref{equilibriumcondition} implies that if $x$ is an equilibrium, and $x_{i,e},x_{i,f}>0$, then $\mu_{i,e}(x) = \mu_{i,f}(x)$. 
	Define $p:= \sum_{i \in N} |I_i| \leq nm$. Then,  if the correct  support set $I_i$ of each player is known, the equilibrium can be computed by solving the following set of $p$ linear equations on $p$ variables. 
	\begin{enumerate}
		\item For every player we have an equation that makes sure the demand of that player is satisfied. Thus, for each player $i \in N$ we have $\sum_{e \in I_i} x_{i,e}= d_i$.
		\item  For every player $i \in N$, there are $|I_i| - 1$ equations of type $\mu_{i,e}(x)= \mu_{i,f}(x)$ for $e,f \in I_i$, which we write as $a_{i,e}(x_e + x_{i,e}) - a_{i,f}(x_{f} + x_{i,f}) = b_{i,e} - b_{i,f}$. Note that $x_e$ is not an extra variable, but an abbreviation for $\sum_{i \in N} x_{i,e}$.
	\end{enumerate}
	
	We refer to this set of equalities as $Ax=b$, where $A$ is a $p \times p$ matrix. Note that as the equilibrium exists and is unique, matrix $A$ is non-singular. Then, using Cramer's Rule, the unique solution of this system is given by: $x_{i,e} =\det(A_{i,e}) / \det (A) = |\det(A_{i,e})| / |\det (A)|,$ where $A_{i,e}$ is the matrix formed by replacing the column that corresponds to value $x_{i,e}$ in $A$ by $b$. We define $Q:=\{\{a_{i,e}, b_{i,e} \mid i \in N, e\in E_i \} \cup \{d_i \mid i \in N \} \cup \{1\} \}$ as the set of input values and  $a_{\gcd}:=\max\{a \in \Q_{>0} \mid \forall q \in Q, \; \exists \ell \in \N \text{ such that } q=a\cdot\ell \}$ as the \emph{greatest common divisor} of $Q$.
Note that since  $ 1\in Q$, we have $a_{\gcd}=\frac{1}{h}$ for some $ h\in \N$.
Then, as all values in $A$ and $b$ depend on adding and subtracting values in $Q$, $|\det(A_{i,e})|$ is an integer multiple of $(a_{\gcd})^{p}$ and, hence, (using $a_{\gcd}=\frac{1}{h}$) an integer multiple of $a_{\gcd}^{nm}$. Thus, all player-specific loads are integer multiples of $a_{\gcd}^{nm}/|\det(A)|$ and, hence, if we define $k^* =a_{\gcd}^{nm}/(2 \cdot|\det(A)|)$, $x$ is an equilibrium for the $k^*$-integral splittable game. Note that we can compute $a_{\gcd}$ in running time $O(nm\log{a_{\max}})$. 
\ifllncs \qed \fi
\end{proof}
We do not know matrix $A$ beforehand, but we do know that $2a_{\max}$ is an upper bound on the values occurring in $A$. Using Hadamard's inequality we find that $|\det(A)|\leq (2a_{\max})^{nm} (nm)^{nm/2}$. Hence, we can find a lower bound of $k^*$: 

$$k^* \geq a_{\gcd}^{nm}/((2a_{\max})^{nm} (nm)^{nm/2}).$$

By Corollary~\ref{cglobalload2} and Corollary~\ref{clocalload2}, we know that for the unique atomic splittable equilibrium $x$ and any $k$-integral-splittable equilibrium $x_k$, there exist bounds on $|x_e - (x_k)_e|$ and $|x_{i,e} - (x_k)_{i,e}|$ in terms of $k$, $n$ and $m$. Thus, if we compute an equilibrium for a sufficiently small $k_0$, this $k_0$-integral-splittable equilibrium should be fairly similar to the unique atomic splittable equilibrium. Hence, it enables us to find the correct support sets.  Then, given the correct support set of each player, we can compute the exact atomic splittable equilibrium by solving the system $Ax=b$ as described earlier.
\begin{theorem}
	Given an atomic splittable congestion game $\G$ and an equilibrium $x_{k_0}$ for the $k_0$-splittable game $\G_{k_0}$, where $k_0 := a_{\gcd}^{nm} / (3nm(m-1) \lceil (2a_{\max})^{nm} (nm)^{nm/2} \rceil) $.
	We can compute in $O((nm)^3)$ the unique atomic splittable equilibrium $x$ for game $\G$.
\end{theorem}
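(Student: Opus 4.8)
The plan is to use the supplied $k_0$-equilibrium $x_{k_0}$ to read off the support sets $I_i:=\{e\in E\mid x_{i,e}>0\}$ of the unique atomic splittable equilibrium $x$, and then to recover $x$ exactly by solving the linear system $Ax=b$ from the proof of Theorem~\ref{atomictosplittable}. First I would pin down a gap between $0$ and the smallest positive player-specific load occurring in $x$. By Theorem~\ref{atomictosplittable} together with the Hadamard bound $|\det(A)|\leq(2a_{\max})^{nm}(nm)^{nm/2}$ quoted right after it, $x$ is the unique equilibrium of the $k^*$-integral game for some $k^*\geq a_{gcd}^{nm}/\big((2a_{\max})^{nm}(nm)^{nm/2}\big)\geq 2m^2k_0$, where the last inequality is exactly the reason for the rounding and the factor $\tfrac1{2m^2}$ in the definition of $k_0$. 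In particular every entry $x_{i,e}$ is an integer multiple of $k^*$, hence $x_{i,e}\in\{0\}\cup[k^*,\infty)\subseteq\{0\}\cup[2m^2k_0,\infty)$.

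Next I would apply Lemma~\ref{localload2} with $k=k_0$, giving $|(x_{k_0})_{i,e}-x_{i,e}|<m^2k_0$ for all $i\in N$, $e\in E$. Combined with the previous dichotomy, $x_{i,e}=0$ forces $(x_{k_0})_{i,e}<m^2k_0$, whereas $x_{i,e}>0$ forces $(x_{k_0})_{i,e}>2m^2k_0-m^2k_0=m^2k_0$. Therefore $I_i=\{e\in E\mid (x_{k_0})_{i,e}\geq m^2k_0\}$, and since the threshold $m^2k_0$ is explicitly known (we never need $k^*$ or $\det A$ themselves), all sets $I_i$ can be computed from $x_{k_0}$ in $O(nm)$ arithmetic operations. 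The degenerate case $m=1$ needs a separate line: then each player trivially puts $d_i$ on the single resource, so we may assume $m\geq2$, which is also the hypothesis under which Lemma~\ref{localload2} was proven.

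With all support sets in hand, I would form the $p\times p$ system $Ax=b$ from the proof of Theorem~\ref{atomictosplittable}, where $p=\sum_{i\in N}|I_i|\leq nm$: for each player $i$ one demand equation $\sum_{e\in I_i}x_{i,e}=d_i$, and $|I_i|-1$ equations $a_{i,e}(x_e+x_{i,e})-a_{i,f}(x_f+x_{i,f})=b_{i,e}-b_{i,f}$ for $e,f\in I_i$. As already argued there, $A$ is non-singular (existence and uniqueness of the equilibrium), so Gaussian elimination returns its unique solution in $O(p^3)=O((nm)^3)$ time. Since the true equilibrium $x$ has exactly these supports and satisfies all these equalities, it is that unique solution; and as $x$ is an equilibrium it automatically also satisfies the remaining conditions $\mu_{i,e}(x)\leq\mu_{i,f}(x)$ for $f\notin I_i$ and $x_{i,e}\geq0$, so no further checking is needed. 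The dominating cost is the $O((nm)^3)$ linear solve.

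The one delicate point is the first step: because we only have the a priori Hadamard bound on $|\det(A)|$ rather than its exact value, I have to make sure that $2m^2k_0\leq k^*$ still holds after the ceiling in the denominator of $k_0$, and that the \emph{strict} inequality in Lemma~\ref{localload2} produces a genuine separation of the two cases around the computable cutoff $m^2k_0$; once that is in place the rest is bookkeeping and a standard determinant-free linear solve.
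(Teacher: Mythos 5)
Your proposal is correct and takes essentially the same route as the paper: use Lemma~\ref{localload2} together with the lower bound $k^*\geq 2m^2k_0$ (guaranteed by the Hadamard bound and the ceiling in the definition of $k_0$) to separate $x_{i,e}=0$ from $x_{i,e}>0$ via the computable threshold $m^2k_0$, then recover $x$ by solving the $p\times p$ system $Ax=b$ in $O((nm)^3)$. Your extra remarks — handling $m=1$ separately and being explicit about where the strict inequality of Lemma~\ref{localload2} gives the clean dichotomy around the cutoff — are sound refinements but do not change the argument.
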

\begin{proof}
	First note that all demands $d_i$ are integer multiples of $k_0$, as $d_i$ is an integer multiple of $a_{\gcd}$, and both $3nm(m-1)$ and $\lceil (2a_{\max})^{nm} (nm)^{nm/2} \rceil$ are integers.  Theorem~\ref{atomictosplittable} implies that there exists a $k^*$ such that the atomic splittable equilibrium is also an equilibrium for the $k^*$-integral splittable game. Hence, for all $i\in N$ and $e \in E$ we have that $x_{i,e} = z_{i,e} \cdot k^*$ for some $z_{i,e} \in \N_{\geq 0}$. In the following we show that there is a load-threshold $\tfrac{3}{2}nm(m-1)k_0$ that enables us to decide whether or not a resource receives any demand from player $i$ in the equilibrium of the atomic splittable game.
		\begin{enumerate}
		\item If $(x_{k_0})_{i,e} < \tfrac{3}{2}nm(m-1)k_0$, then $x_{i,e} = 0$. Assume by contradiction that  $x_{i,e}>0$. Remember that the atomic splittable equilibrium is also a $k^*$-equilibrium and thus, if $x_{i,e}>0$, then the inequality $x_{i,e}\geq k^*$ must hold. We obtain $x_{i,e}-(x_{k_0})_{i,e} > k^* - \tfrac{3}{2}nm(m-1)k_0  \geq \tfrac{3}{2}nm(m-1)k_0, $
		where we used $k^*\geq 3nm(m-1)k_0$.
		This contradicts Corollary~\ref{clocalload2} and hence $x_{i,e} = 0$.
		\item If $(x_{k_0})_{i,e} \geq \tfrac{3}{2}nm(m-1)k_0$, then we prove that $x_{i,e} > 0$. Assume by contradiction that  $x_{i,e}=0$. We get $(x_{k_0})_{i,e}-x_{i,e} \geq \tfrac{3}{2} nm(m-1)k_0$, which contradicts Corollary~\ref{clocalload2}. Thus, $x_{i,e} > 0$.
	\end{enumerate}
Hence, given an equilibrium $(x_{k_0})$ for $k_0$-splittable game $\G_{k_0}$, we can compute the correct support sets $I_i= \{e \in E \mid (x_{k_0})_{i,e} \geq nm(m-1)k_0 \}$ for all $i \in N$. Given the correct support sets, we can easily compute the correct, exact equilibrium by solving the system $Ax=b$ of at most $nm$ linear equations in running time $O((nm)^3)$ using Gaussian elimination~\cite{Nemhauser1988}.  \ifllncs \qed \fi
\end{proof}
It is left to compute an equilibrium $x_{k_0}$ for the $k_0$-splittable game $\G_{k_0}$.
\section{A Polynomial Algorithm for Integral Games}
\label{sec:splittablealgorithm}
The goal of this section is to develop a \emph{polynomial time} algorithm that computes an equilibrium for any $k$-integral splittable singleton game with player-specific affine cost functions. 
Our algorithm uses as a subroutine the  algorithm of Harks, Klimm and Peis~\cite[Algorithm 1]{harks2016resource} that computes  in pseudo-polynomial time an equilibrium for any integer-splittable
congestion game with player-specific discrete-convex cost functions.

\subsection{The Algorithm of  Harks, Klimm and Peis~\cite[Algorithm 1]{harks2016resource}}

The algorithm of  Harks, Klimm and Peis~\cite[Algorithm 1]{harks2016resource}, which we denote by \textsc{PAlg}, starts with the empty strategy profile
and then inductively increases the demand of some player by one packet (of size $k$ in our case).
This new packet is placed greedily on some resource in order to minimize the private cost of the
respective player. By induction, the initial strategy profile (without the additional packet) is
an equilibrium and it is shown that by greedily placing the packet, the respective player plays a best response for the enlarged game.
Moreover, only players using the resource with the increased load by the new packet
may have an incentive to deviate, and, a best response consists of shifting a single packet away, see Fig.~\ref{alg:greedy}.
This leads to a structured restricted best response dynamic with a total running time of
 $O(n^2m(\frac{\delta }{k})^3)$, where $\delta:=\max_{i\in N} d_i$. Note that $\delta$ is not polynomially bounded in the input size,  thus, the running time is only pseudo-polynomial, hence, the naming \textsc{PAlg}.

\begin{algorithm}[tb]
 \Indm\Indmm
    \KwIn{$\G_{k}((d_i)_{i \in N})$}
  \KwOut{pure Nash equilibrium $ x$}
  \Indp\Indpp
$ d'_i \leftarrow 0$ and $ x_i\leftarrow  0$ for all $i\in N$\;
		\While{$ \sum_{i\in N}d'_i < \sum_{i\in N}d_i$}{	
			Choose $i\in N$ with $ d'_i<d_i$\;\label{it:choose_player}
			$d'_i\leftarrow d'_i+k$ \;\label{it:set_increase_demand}
	Choose a best response $y_i \in  \S_i(d'_i,k)$ with $H(y_i,x_i) = k$\;\label{it:comp_demand}
$ x_i \leftarrow  y_i$\; \label{it:increase_demand}
\While{$\exists i\in N$ who can improve in $\G_{k}((d'_i)_{i \in N})$
\label{it:if}}
 {Compute a best response $y_i \in \S_i(d'_i,k)$ with $H(y_i,x_i)=2k$\;  \label{it:choose_yi}
$ x_i \leftarrow  y_i$\;}
\label{it:endif}}
Return $ x$\;
 \caption{Pseudo-polynomial algorithm \textsc{PAlg} computing a pure Nash equilibrium.}
 \label{alg:greedy}
\end{algorithm}

\subsection{A Polynomial Time Algorithm}
We will construct a new algorithm named \PH\ with polynomial running time of order $O(n^5m^{10} \log(\delta/k))$.  
This algorithm works as follows.
For a game with desired packet size $k_0$, we start by finding an equilibrium for packet size $k=k_0\cdot 2^q$ for some $q$ of order $O(\log(\delta/k_0))$, satisfying only a part of the player-specific demands. Then we repeat the following two steps:
\begin{enumerate}
	\item {\bf Subroutine $\RESTORE$.} We half the packet size from $k$ to $k/2$ and construct a $k/2$-equilibrium using the $k$-equilibrium. Here, a $k$-equilibrium denotes an equilibrium for an integrally-splittable game with common packet size $k$.  
\item {\bf Subroutine $\ADD$.} For each player $i$ we repeat the following step: if the current packet size $k$ is smaller than the currently unscheduled demand of player $i$, we add one more packet for this particular player to the game and recompute the equilibrium. This part of the algorithm has also been used in the algorithm by Tran-Thanh et al.~\cite{Tran11} and Harks et al.~\cite{harks2016resource}.
\end{enumerate} 
After $q$ iterations, we have scheduled all demands and obtain an equilibrium for the desired packet size $k_0$. 
Let us now describe the two subroutines $\ADD$ and $\RESTORE$ and the main algorithm \PH\  in more detail below.

\subsubsection{\ADD}
The first subroutine, $\ADD$, is described in Algorithm~\ref{addfunction} and consists of lines 4-10 of~\cite[Algorithm 1]{harks2016resource}. Given an equilibrium $x_k$ for game  $\G_k((d_i)_{i \in N})$, it computes an equilibrium for the game, where the demand for some player $j$ is increased by a packet of size $k$. First it decides on the best resource $f$ for player $j$ to put her new packet. In effect, the load on resource $f$ increases and only those players with $x_{i,f}>0$ can potentially decrease their cost by a deviation. In this case, Harks et al. proved in~\cite[Theorem 3.2]{harks2016resource} that a best response $y_i$ can be obtained by a restricted best response moving a single packet away from $f$.
\begin{algorithm}[]
	\KwIn{equilibrium $x_k$ for $\G_{k}((d_i)_{i \in N})$, player $j$}
	\KwOut{equilibrium $x'_k$ for $\G_{k}((d'_i)_{i \in N})$, where $d'_j\leftarrow d_j+k; d'_i\leftarrow d_i$ for all $i \in N\setminus\{j\}$}
	$x \leftarrow x_k$; $d'_j\leftarrow d_j+k; \S'_j \leftarrow \S_j(d'_j,k)$; $d'_i\leftarrow d_i$ for all $i \in N\setminus\{j\}$\\
	Choose $f \in \arg\min_{e\in E}\{ \mu^{+k}_{j,e}(x)\}$;\\
	$x_{j,f} \leftarrow  x_{j,f} + k$;\\
	\While{$\exists i \in N$ who can improve in $\G_k$ }{
		Compute a restricted best response $y_i \in \S'_i$;\\
		$x_i \leftarrow y_i$;\\
	}
	$x'_k \leftarrow x$;\\
	\Return{$x'_k$}
	\caption{Subroutine $\ADD(x,j,\G_k((d_i)_{i \in N}))$}
	\label{addfunction}
\end{algorithm}  

\subsubsection{\RESTORE}
The second subroutine, $\RESTORE$, takes as input an equilibrium $x_{k}$ for packet size $k$ and game $\G_k((d_i)_{i \in N})$, and constructs an equilibrium for game $\G_{k/2}((d_i)_{i \in N})$ with packet size $k/2$. 
The algorithm relies on the sensitivity result of Theorem~\ref{thm:localload2}
bounding the load difference of every player under an equilibrium for the games $\G_k((d_i)_{i \in N})$
and $\G_{k/2}((d_i)_{i \in N})$, respectively. 
In particular, Theorem~\ref{thm:localload2} implies that for any two equilibria $x_k, x_{k/2}$
of the respective games, we have
$|(x_{k})_{i,e}-(x_{k/2})_{i,e}|\leq 2m^2nk$ for all $i\in N, e\in E$. Hence, we know already that
\emph{any} equilibrium $x_{k/2}$ satisfies $(x_{k/2})_{i,e}\geq y_{i,e}$ for all $i\in N, e\in E$, where
\begin{equation}\label{eq:set_y} y_{i,e}:=\max\{(x_{k})_{i,e}- 2m^2nk, 0\} \text{ for all $i\in N, e\in E$}.
\end{equation}
It follows that we can safely fix  $y_{i,e}, i\in N, e\in E$ according to~\eqref{eq:set_y}. The idea is to construct a  new game
with reduced demand of at most $4 m^3n^2$  packets of size $k/2$ and strategies $z_{k/2}\in \S_i(\bar d_i,k/2)$, where the reduced
demands are defined as
\[ \bar{d}_i = \sum_{e \in E} ((x_k)_{i,e} - y_{i,e}) \text{ for all $i \in N$ and $e \in E$.}\]

The private cost functions for players $i\in N$ are defined as
\begin{equation}\label{eq:hatpi} \hat \pi_i(z_{k/2}):= \sum_{e\in E}a_{i,e}  ((z_{k/2})_e+(y_{k/2})_e) ((z_{k/2})_{i,e}+(y_{k/2})_{i,e})+b_{i,e}((z_{k/2})_{i,e}+(y_{k/2})_{i,e}), \end{equation}
where $y_{k/2}:=y$ appears as a parameter. The form of $ \hat \pi_i(z_{k/2})$ in \eqref{eq:hatpi} is  designed to replicate the original cost function $ \pi_i(x_{k/2})$ for $x_{k/2}:=z_{k/2}+y_{k/2}$.
By multiplying out terms in \eqref{eq:hatpi}, it follows that the expressions $\sum_{e\in E}a_{i,e}(((z_{k/2})_{-i,e}+(y_{k/2})_e)(y_{k/2})_{i,e})+b_{i,e}(y_{k/2})_{i,e}$ are independent of $(z_{k/2})_i$ and can thus be left out (without losing equilibria).
We obtain a strategically equivalent game by replacing $\hat \pi_i(z_{k/2})$
with
\begin{align*} \bar \pi_i(z_{k/2})& := \sum_{e\in E}a_{i,e}   ((z_{k/2}+y_{k/2})_e (z_{k/2})_{i,e}+(z_{k/2})_{i,e}(y_{k/2})_{i,e})+b_{i,e}(z_{k/2})_{i,e}\\
&=\sum_{e\in E}a_{i,e}   (z_{k/2})_e (z_{k/2})_{i,e}+(b_{i,e}+a_{i,e}((y_{k/2})_e+(y_{k/2})_{i,e})) (z_{k/2})_{i,e}.
\end{align*}
Defining $\bar b_{i,e}:=b_{i,e}+a_{i,e}((y_{k/2})_e+(y_{k/2})_{i,e}), i\in N, e\in E$,  we obtain a standard
integer-splittable singleton congestion game with packet size $k/2$ and affine player-specific cost functions
$\bar c_{i,e}(z_e):= a_{i,e}  z_e+\bar b_{i,e}, i\in N, e\in E$.
This new game is denoted by 
\[\bar \G_{k/2}((\bar d_i)_{i \in N}) := \left(N,E, (\bar d_i)_{i \in N},  (\bar \pi_{i,e})_{i \in N, e\in E} \right).\]
$\RESTORE$ uses  now \textsc{PAlg} on  $\bar \G_{k/2}((\bar d_i)_{i \in N})$ to compute an equilibrium for game $\bar \G_{k/2}$. The key property leading to the claimed polynomial running time is that $\bar \G_{k/2}((\bar d_i)_{i \in N})$
 only involves a polynomially bounded number of packets. The pseudo-code of subroutine $\RESTORE$ can be found in Algorithm~\ref{restorefunction}.
\begin{algorithm}[h!]
	\KwIn{equilibrium $x_{k}$ for $\G_{k}((d_i)_{i \in N})$}
	\KwOut{equilibrium $x_{k/2}$ for $\G_{k/2}((d_i)_{i \in N})$}
	Define  $y_{i,e}:=\max\{(x_{k})_{i,e}- 2m^2nk, 0\}$ for all $i \in N$ and $e \in E$;\\  
	Define $\bar{d}_i = \sum_{e \in E} (x_k)_{i,e} - y_{i,e}$  for all $i \in N$ and $e \in E$;\\
	$z_{k/2}:=\textsc{PAlg}(\bar \G_{k/2}((\bar d_i)_{i \in N}) )$;\\
	\Return{$x_{k/2}:=y+z_{k/2}$;}
	\caption{Subroutine $\RESTORE(x_{k},\G_{k}((d_i)_{i \in N}))$.}
	\label{restorefunction}
\end{algorithm}

\subsubsection{\PH}
Using the subroutines $\ADD$  and $\RESTORE$, the algorithm $\PH$  computes an equilibrium $x_{k_0}$ for the $k_0$-splittable game $\G_{k_0}((d_i)_{i \in N}))$. In this algorithm, we start with an equilibrium $x_k$ for $\G_{k}((d'_i)_{i \in N}))$, where $d'_i=0$ for all $i \in N$, $k = 2^{q_1}k_0$ and $q_1 = \arg\min_{q \in \N} \{ 2^qk_0 > \max_{i \in N} d_i  \}$. Note that this game has a trivial equilibrium, where $(x_k)_{i,e}=0$ for all $i\in N$ and $e \in E$.  Then, we repeat the following two steps: 
\begin{enumerate}
	\item  Given an equilibrium $x_k$ for $\G_k((d'_i)_{i \in N})$, we construct an equilibrium for $\G_{k/2}((d'_i)_{i \in N})$ using subroutine $\RESTORE$ and set $k$ to $k/2$.
	\item For each player $i \in N$ we check if $d_i-d'_i\geq k$. If so, we increase $d'_i$ by $k$ and recompute equilibrium $x_k$ using subroutine $\ADD$.
\end{enumerate}
After $q_1$ iterations $\PH$ returns an equilibrium $x_{k_0}$ for $\G_{k_0}((d_i)_{i \in N}))$. The pseudo-code of $\PH$ can be found in Algorithm~\ref{bigalgorithm}.

\begin{algorithm}[h!]
	\KwIn{Integral splittable congestion game  $\G_{k_0}=(N,E,(d_i)_{i \in N},(c_{i,e})_{i \in N, e \in E})$.}
	\KwOut{An equilibrium $x_{k_0}$ for $\G_{k_0}$.}
	Initialize  $q_1 = \arg\min_{q \in \N} \{ 2^qk_0 > \max_{i \in N} d_i  \}$; 
	$k 	\leftarrow 2^{q_1}k_0$;  
	$d'_i \leftarrow 0$; $x_k \leftarrow (0)_{e \in E, i \in N}$;\\	
	\For{$1, \dots , q_1-1$}{	
		$x_{k/2} \leftarrow \RESTORE(x_{k},\G_k((d'_i)_{i \in N})))$;\\
		$k \leftarrow k/2$; \\	
		\For{$i \in N$}{
			\If{$d_i - d'_i>k$}{
				$x_k \leftarrow \ADD(x_k,i,\G_k((d'_i)_{i \in N})))$;\\
				$d'_i \leftarrow  d'_i + k$;\\}
		}		
	}
	\Return{$x_k$;}
	\caption{Algorithm $\PH(\G_{k_0}((d_i)_{i \in N}))$}
	\label{bigalgorithm}
\end{algorithm}

\section{Correctness}
\label{sec:correct}
In this section, we prove that $\PH$ indeed returns an equilibrium for game $\G_{k_0}((d_i)_{i \in N})$. In order to do so, we first need to verify that the two subroutines $\ADD$ and $\RESTORE$ are correct.  
Harks, Klimm and Peis~\cite[Thm.~5.1]{harks2016resource} proved that subroutine $\ADD$ indeed returns an equilibrium strategy for the new game with increased demand. It is left to verify correctness of $\RESTORE$ and $\PH$.

\subsection{Correctness \RESTORE}

By Lemma~\ref{groenevelt}, a strategy profile $z_{k/2}$ for the game $\bar \G_{k/2}$ is an equilibrium if and only if
\[ \bar \mu^{-k/2}_{i,e}(z_{k/2})\leq \bar \mu^{+k/2}_{i,f}(z_{k/2}) \text{ for all }i\in N, e,f\in E \text{ with }
(z_{k/2})_{i,e}>0,\]
where
$\bar \mu^{-k/2}_{i,e}(z_{k/2})$ and  $\bar \mu^{+k/2}_{i,e}(z_{k/2})$ denote the respective marginal costs for game $\bar \G_{k/2}$. 
We need to show that the combined profile $x_{k/2}:=(z_{k/2}+y_{k/2})$ satisfies
the conditions of Lemma~\ref{groenevelt} for the original game $\G_{k/2}$.
For this, we first relate the different marginal costs with each other.
\begin{lemma}\label{lem:marginal-bar}
Let $x_{k/2}:=(z_{k/2}+y_{k/2})$. Then, the following holds true:
\begin{enumerate}
\item  $\bar \mu^{-k/2}_{i,e}(z_{k/2})=\mu^{-k/2}_{i,e}(x_{k/2})$ for all $e\in E, i\in N$ with  $ (z_{k/2})_{i,e}>0$,
\item $\bar \mu^{+k/2}_{i,e}(z_{k/2})=\mu^{+k/2}_{i,e}(x_{k/2})$ for all $e\in E, i\in N$.
\end{enumerate}
\end{lemma}
\begin{proof}
Both statements follow by simple calculations. For $ (z_{k/2})_{i,e}>0$, we trivially have 
$ (x_{k/2})_{i,e}>0$ and thus we get:
\begin{align*}
\bar \mu^{-k/2}_{i,e}(z_{k/2})&= k(a_{i,e}((z_{k/2})_{i,e} + (z_{k/2})_e-k/2)+ \bar b_{i,e})\\
&= k(a_{i,e}((z_{k/2})_{i,e} + (y_{k/2})_{i,e}+(z_{k/2})_e+(y_{k/2})_e-k/2)+ b_{i,e})\\
&= k(a_{i,e}((x_{k/2})_{i,e} + (x_{k/2})_e-k/2)+ b_{i,e})\\
&= \mu^{-k/2}_{i,e}(x_{k/2}).
\end{align*}
The calculation for $\bar \mu^{+k/2}_{i,e}(z_{k/2})=\mu^{+k/2}_{i,e}(x_{k/2})$ works exactly the same by replacing $-k/2$ with $+k/2$.\ifllncs \qed \fi
\end{proof}

With Lemma~\ref{lem:marginal-bar} it follows that for an equilibrium $z_{k/2}$
of game $\bar \G_{k/2}$, the strategy profile $x_{k/2}:=(z_{k/2}+y_{k/2})$ is an equilibrium for $\G_{k/2}$
provided $(y_{k/2})_{i,e}>0\Rightarrow (z_{k/2})_{i,e}>0$ holds true for all $i\in N, e\in E$. 
For proving this, we first show a further sensitivity result for the  profiles $x_k$ and $(z_{k/2}+y_{k/2})$.
\begin{lemma}\label{lem:load_partial}
Let $x_k$ be an equilibrium for $\G_{k}$ and let $z_{k/2}$ be an equilibrium for the corresponding game  $\bar \G_{k/2}$.
Then, $|(z_{k/2})_e+(y_{k/2})_e-(x_k)_e|< 2nmk$ for all $e\in E$.
\end{lemma}
\begin{proof}
Assume $(z_{k/2})_e+(y_{k/2})_e-(x_k)_e\geq 2nmk$.
Define $E^+=\{e\in E\vert (z_{k/2})_e+(y_{k/2})_e-(x_k)_e\geq 0\}$ and $E^-=\{e\in E\vert (z_{k/2})_e+(y_{k/2})_e-(x_k)_e< 0\}$.
We get 
\[ \sum_{i\in N}\sum_{e\in E^+}  (z_{k/2})_{i,e}+(y_{k/2})_{i,e}-(x_k)_{i,e}\geq2nmk.\]
Hence, there is $i\in N$ with 
\begin{equation}\label{eq:newz} \sum_{e\in E^+}  (z_{k/2})_{i,e}+(y_{k/2})_{i,e}-(x_k)_{i,e}\geq 2mk > \tfrac{3}{2}mk.\end{equation}
Load balance implies
\[ \sum_{e\in E^-}  (z_{k/2})_{i,e}+(y_{k/2})_{i,e}-(x_k)_{i,e}\leq -2mk < - \tfrac{3}{2}mk.\]
From~\eqref{eq:newz} we get that there is $e\in E^+$ with 
\[  (z_{k/2})_{i,e}+(y_{k/2})_{i,e}-(x_k)_{i,e}>\tfrac{3}{2}k.\]
With $(y_{k/2})_{i,e}\leq (x_k)_{i,e}$ (cf.~\eqref{eq:set_y}), we have $(z_{k/2})_{i,e}>0$.
Moreover, there is $f\in E^-$ with 
\[ (z_{k/2})_{i,f}+(y_{k/2})_{i,f}-(x_k)_{i,f}<-\tfrac{3}{2}k.\]
It follows that $(x_k)_{i,f}>0$.
Altogether we obtain for some $i\in N, e\in E^+, f\in E^-$:
\begin{align*}
(z_{k/2})_e+(y_{k/2})_e-(x_k)_{e}+ (z_{k/2})_{i,e}+(y_{k/2})_{i,e}-(x_k)_{i,e}&>\tfrac{3}{2}k\\
(z_{k/2})_f+(y_{k/2})_f-(x_k)_{f}+ (z_{k/2})_{i,f}+(y_{k/2})_{i,f}-(x_k)_{i,f}&<-\tfrac{3}{2}k.
\end{align*}
With the usual calculation of marginal costs (as in the proof of Theorem~\ref{thm:globalload2}) and using Lemma~\ref{lem:marginal-bar} we get:
\begin{equation*}
	\bar \mu^{-k/2}_{i,e}(z_{k/2})
	> \frac{1}{2}\mu^{+k}_{i,e}(x_{k})  
	\geq \frac{1}{2} \mu^{-k}_{i,f}(x_{k})  
	> \bar \mu^{+k/2}_{i,f}(z_{k/2}).  
		\end{equation*}
		This contradicts the fact that $z_{k/2}$ is an equilibrium for game $\bar \G_{k/2}$. \ifllncs \qed \fi
\end{proof}
Now we are ready to prove that $x_{k/2}:=(z_{k/2}+y_{k/2})$ is an equilibrium for  $\G_{k/2}$.
\begin{lemma}
If $z_{k/2}$ is an equilibrium for game  $\bar \G_{k/2}$ w.r.t. an equilibrium $x_k$ for game $\G_{k}$, then 
$x_{k/2}:=(z_{k/2}+y_{k/2})$ is an equilibrium for  $\G_{k/2}$.
\end{lemma}
\begin{proof}
With Lemma~\ref{lem:marginal-bar},
it suffices to show that for all $e\in E, i\in N$:
\[ (y_{k/2})_{i,e}>0\Rightarrow (z_{k/2})_{i,e}>0.\]
Assume by contradiction that there is $i\in N$, $e\in E$
with $(y_{k/2})_{i,e}>0,  (z_{k/2})_{i,e}=0$.
By~\eqref{eq:set_y}, this implies
\[ (x_{k})_{i,e}-((z_{k/2})_{i,e}+(y_{k/2})_{i,e})= 2 m^2nk.\]
It follows that  $(x_{k})_{i,e}>0$.
With Lemma~\ref{lem:load_partial}, we get
\[ (x_{k})_{e}-((z_{k/2})_{e}+(y_{k/2})_{e}) \geq - 2mnk.\]
Hence, 
\[ (x_{k})_{e}-((z_{k/2})_{e}+(y_{k/2})_{e})+ (x_{k})_{i,e}-((z_{k/2})_{i,e}+(y_{k/2})_{i,e})\geq 2m(m-1)nk>2k.\]
Balance constraints imply
\[ \sum_{f\neq e} (x_{k})_{f}-((z_{k/2})_{f}+(y_{k/2})_{f})+ (x_{k})_{i,f}-((z_{k/2})_{i,f}+(y_{k/2})_{i,f})\leq - 2m(m-1)nk.\]
Thus, there is $f\neq e$ with 
\[ (x_{k})_{f}-((z_{k/2})_{f}+(y_{k/2})_{f})+ (x_{k})_{i,f}-((z_{k/2})_{i,f}+(y_{k/2})_{i,f})\leq - 2mnk<-2k.\]
Using $(x_{k})_{i,f}\geq (y_{k/2})_{i,f}$, the case $(z_{k/2})_{i,f}=0$ implies that the sensitivity bound of Lemma~\ref{lem:load_partial} applied on $f$
is violated, hence $(z_{k/2})_{i,f}>0$ must hold.
Applying the  same argumentation regarding the marginal cost (as in the proof of Theorem~\ref{thm:globalload2}) we get  that $z_{k/2}$ is not an equilibrium for game $\bar \G_{k/2}$
leading to a contradiction.
\ifllncs \qed \fi
\end{proof}

\subsection{Correctness \PH}
 It is left to prove that $\PH$ returns an equilibrium for  $\G_{k_0}((d_i)_{i \in N})$. 
\begin{theorem}
	Given a $k_0$-integral splittable singleton game with affine player-specific cost functions $\G_{k_0}:= \left(N,E,(d_i)_{i \in N}, (c_{i,e})_{i \in N,e \in E} \right)$, $\PH$ returns an equilibrium for $\G_{k_0}$. 
\end{theorem}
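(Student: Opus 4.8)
The plan is to argue by induction on the iterations of the main loop of $\PH$ (lines 2--11 of Algorithm~\ref{bigalgorithm}), maintaining the invariant that after each iteration the current strategy profile $x_k$ is an equilibrium for $\G_k((d'_i)_{i\in N})$ with the current packet size $k$ and current scheduled demands $(d'_i)_{i\in N}$. The base case is the trivial all-zero profile for packet size $2^{q_1}k_0$ and demands $d'_i=0$, which is vacuously an equilibrium since $\mu^{-k}_{i,e}(0)=-\infty$ for every $i$ and $e$. For the inductive step there are two phases to check: the call to $\RESTORE$, which halves the packet size, and the calls to $\ADD$, which bump individual scheduled demands by $k$.

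The correctness of the $\ADD$ phase is inherited directly from Harks, Peis, and Klimm~\cite{harks2014resource}, as noted in the text: given an equilibrium for $\G_k((d'_i)_{i\in N})$, the subroutine $\ADD(x_k,i,\G_k)$ returns an equilibrium for the game with $d'_i$ increased by $k$. So the bulk of the work is the $\RESTORE$ phase. Here I would invoke the structural results already established: Lemma~\ref{totalloadb} and Lemma~\ref{totalloadf} guarantee that the backward path (Algorithm~\ref{alg:backwardpath}) and forward path (Algorithm~\ref{alg:forwardpath}) are well-defined whenever they are invoked, and Lemma~\ref{marginalcostsrelation} guarantees each path terminates after finitely many steps. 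Then Lemma~\ref{marginalincrease} shows that each pass through the while-loop of $\RESTORE$ strictly increases $\Delta(x)$ in the sorted lexicographic order while the $\Count(\Delta_{\min}(x),\cdot)$ value strictly decreases; since $\S$ is finite, only finitely many distinct values of $\Delta$ can occur, so the while-loop exits. Upon exit, $\Delta_{\min}(x)\ge 0$, i.e. $\mu^{-k}_{i,\max}(x)\le\mu^{+k}_{i,\min}(x)$ for all $i\in N$, which by Lemma~\ref{groenevelt} is exactly the statement that $x$ is an equilibrium for $\G_k((d'_i)_{i\in N})$. This establishes that $\RESTORE(x_{2k},\G_k((d'_i)_{i\in N}))$ returns a $k$-equilibrium, completing the inductive step.

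Finally, I would close the argument by tracking the scheduled demands: after $q_1-1$ iterations of the main loop the packet size has been halved $q_1-1$ times, so $k=2^{q_1}k_0/2^{q_1-1}=2k_0$ — wait, one needs to be careful here; the loop runs $q_1-1$ times with $k\leftarrow k/2$ at the start of each, and I would verify against the pseudocode that the final packet size is indeed $k_0$ and that by the choice $q_1=\arg\min_{q\in\N}\{2^qk_0>\max_i d_i\}$ together with the fact that $k_0$ divides every $d_i$, the $\ADD$ steps have by termination brought every $d'_i$ up to $d_i$ (each $d_i$, being a multiple of $k_0$, has a finite binary expansion in units of $k_0$, and the test $d_i-d'_i>k$ adds exactly the packets dictated by that expansion as $k$ descends through $2^{q_1-1}k_0,\dots,k_0$). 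Hence the returned profile is an equilibrium for $\G_{k_0}((d_i)_{i\in N})$. The main obstacle is not any single deep lemma but rather the bookkeeping that ties the already-proven local facts together: one must check that the hypotheses of Lemmas~\ref{totalloadb}--\ref{marginalincrease} are genuinely met at every invocation inside the nested loops (in particular that $\RESTORE$ is always fed a bona fide $2k$-equilibrium, which is where the induction hypothesis is used), and that the interleaving of $\RESTORE$ and $\ADD$ preserves the equilibrium invariant step by step; the termination-of-$\RESTORE$ argument via Lemma~\ref{marginalincrease} is the conceptual crux that has already been done.
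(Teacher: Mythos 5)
Your proposal takes essentially the same route as the paper: an induction over iterations of the main loop, maintaining the invariant that the current profile is an equilibrium for the current packet size and current scheduled demands, with the correctness of $\ADD$ inherited from~\cite{harks2014resource}, the well-definedness of the back- and forward paths given by Lemmas~\ref{totalloadb} and~\ref{totalloadf}, termination of $\RESTORE$ by Lemma~\ref{marginalincrease}, and the final equilibrium characterization via Lemma~\ref{groenevelt}. You are also right to pause over the bookkeeping at the end: the paper's own proof is slightly inconsistent with the pseudocode here (the proof text speaks of ``the $q_1$'th iteration'' and of the test $d_i - d'_i \geq k$, while Algorithm~\ref{bigalgorithm} loops over $1,\dots,q_1-1$ and tests $d_i - d'_i > k$), so the off-by-one and strict-versus-weak inequality you flag are genuine quirks of the paper's presentation, not a gap in your reasoning; the intended reading is that after $q_1$ halvings one reaches packet size $k_0$, and the scheduled demands $d'_i = d_i - (d_i \bmod 2^{q_1-q}k_0)$ accumulate exactly the binary digits of $d_i/k_0$, which is what your binary-expansion argument captures.
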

\begin{proof}
	We initialize  $x_{i,e}=0$ for all $i\in N$ and $e\in E$, which is an equilibrium for the game $\G_{2^{q_1}k_0}((0)_{i \in N})$. Assume that in iteration $q$ we enter the for-loop in $\PH$ with an equilibrium $x$ for game $\G_{2^{q_1-q+1}k_0}$ with demands  $d'_i = d_i - (d_i \mod 2^{q_1-q+1}k_0)$, where we use the notation $a \mod b := \alpha b $, where $\alpha=\arg\max\{z\in \Z_{\geq 0}\vert z b\leq a\}$. First, $\RESTORE$ computes an equilibrium for demands demands $d'_i $ and packet size $2^{q_1-q}k_0$. In lines 5-10 of $\PH$ we then check for each player $i\in N$, if her unscheduled load satisfies $d_i - d'_i \geq 2^{q_1-q}k_0$. If so, we schedule an extra packet for player $i$ using subroutine  $\ADD$. Note that at most one packet can be added per player and iteration.  Thus, after the $q$'th iteration in the for-loop, we obtain an equilibrium for demands $d'_i = d_i - (d_i \mod 2^{q_1 - q}k_0)$ and packet size $2^{q_1-q}k_0$. Hence, after the $q_1$'th iteration, we obtain an equilibrium for the desired packet size $2^0k_0 = k_0$ and demands $d'_i = d_i - (d_i \mod k_0) = d_i$, which is an equilibrium for game $\G_{k_0}((d_i)_{i \in N})$. \ifllncs \qed \fi
\end{proof}

\section{Running Time}
\label{sec:runningtime}
We prove that the running time of $\PH$ is polynomially bounded in $n$, $m$, $\log k$ and $\log \delta$, where $\delta$ is the upper bound on player-specific demands $d_i$. For this, we first need to analyze the running time of the two subroutines $\ADD$ and $\RESTORE$.

\subsection{Running Time \ADD}
 In~\cite[Theorem 5.2]{harks2016resource} Harks et al. proved that it takes time $nm(\delta/k)^2$ to execute $\ADD$. If their algorithm is applied to games with singleton strategy spaces and player-specific affine cost functions, we show next that the running time reduces to $O(nm^4)$. The main reason for this is that equilibria are not very sensitive under small changes in demands.  

\begin{lemma} \label{findingapath2} 
	Let $x_k$ be an equilibrium for game $\G_k((d_i)_{i \in N})$ and let $x_q$ be the strategy profile after the $q$'th iteration of the while-loop described in lines 4-7 of subroutine $\ADD$. Then $|(x_k)_{i,e} - (x_q)_{i,e}| < 2mk$ for all $i \in N$ and $e \in E$. 
\end{lemma}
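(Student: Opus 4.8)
The plan is to mirror the proof of Lemma~\ref{globalload2}, but tracking the structure of subroutine $\ADD$ rather than comparing against an atomic equilibrium. Recall that $\ADD$ starts from the equilibrium $x_k$ for $\G_k((d_i)_{i\in N})$, adds one packet of size $k$ for player $j$ on her cheapest resource $f$, and then runs restricted best responses until an equilibrium for the demand vector with $d_j$ increased by $k$ is reached. The key structural fact, established by Harks et al.~\cite{harks2014resource} (and used implicitly in Algorithm~\ref{addfunction}), is that throughout this process the only resource whose total load can strictly exceed its load in $x_k$ is $f$, and only by the single added packet --- each restricted best response moves one packet away from the resource that currently carries the ``extra'' packet, so the invariant $0 \le (x_q)_e - (x_k)_e \le$ (at most one packet, and only on the resource currently holding the surplus) is preserved. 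In particular, $\sum_e |(x_q)_e - (x_k)_e|$ stays bounded: the total load increases by exactly $k$, so $\sum_{e}((x_q)_e - (x_k)_e)^+ \le k + \sum_e ((x_q)_e-(x_k)_e)^- $, but more usefully, at most one resource has $(x_q)_e > (x_k)_e$ and there the gap is exactly $k$; hence any resource $e$ with $(x_q)_e < (x_k)_e$ satisfies $(x_k)_e - (x_q)_e < k \cdot (\text{number of resources}) $ only after a more careful count. I would instead argue directly: since $x_q$ is reached from $x_k$ by $\ADD$, it is itself an equilibrium-in-progress, and I can compare the two equilibria-like profiles via exactly the pigeonhole/improving-move argument of Lemma~\ref{globalload2}.

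Concretely, first I would prove the global-load bound $|(x_q)_e - (x_k)_e| < mk$ for all $e$ (or a similar small constant times $k$), by contradiction: if some resource violated it, then by the same chain of pigeonhole steps on resources and then on players used in Lemma~\ref{globalload2}, one finds a player $i$ and resources $e,f$ with $(x_q)_f+(x_q)_{i,f}$ much larger than $(x_k)_f+(x_k)_{i,f}$ and $(x_q)_e + (x_q)_{i,e}$ much smaller, yielding a violated equilibrium inequality for one of the two profiles. The subtlety is that $x_q$ need not be an equilibrium, only $x_k$ is; but $x_q$ is obtained by restricted best responses, so for every resource $e$ with $(x_q)_{i,e} < (x_k)_{i,e}$ (player $i$ has moved load off $e$ since $x_k$), the last such move was a restricted best response, which by Lemma~\ref{groenevelt}-type reasoning gives $\mu_{i,e}^{-k}(x_q) \le \mu_{i,\min}^{+k}(x_q)$-ish control --- this is exactly the inequality needed to run the contradiction. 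Then, second, I would bootstrap from the global bound to the per-player bound $|(x_q)_{i,e}-(x_k)_{i,e}| < 2mk$, following the proof of Lemma~\ref{localload2}: assuming $(x_q)_{i,e} \ge (x_k)_{i,e} + 2mk$, combine with the global bound $(x_q)_e \le (x_k)_e + mk$ to get $(x_q)_e + (x_q)_{i,e} \ge (x_k)_e + (x_k)_{i,e} + mk$, push the deficit onto the other $m-1$ resources, pick one resource $f$ with $(x_q)_f + (x_q)_{i,f} \ge (x_k)_f + (x_k)_{i,f} - (\text{small})$, note $(x_k)_{i,f} > 0$, and derive $\mu_{i,f}(\cdot) \ge \tfrac1k \mu^{+k}_{i,f} \ge \tfrac1k\mu^{-k}_{i,e} > \mu_{i,e}(\cdot)$ contradicting that $x_k$ is an equilibrium.

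The main obstacle I anticipate is handling the asymmetry that only $x_k$ is a genuine equilibrium while $x_q$ is a mid-run profile: in Lemma~\ref{globalload2} both $x$ and $x_k$ were equilibria, so the contradiction could be sprung from either side, but here I must be careful to always extract the contradiction from the $x_k$ side (or from the fact that the offending move in $\ADD$ was a restricted \emph{best} response, hence the player had no better option). I expect this is manageable because $\ADD$'s invariant guarantees that $x_q$ differs from an equilibrium only in a very controlled way (one surplus packet on one resource), so the relevant marginal-cost inequalities hold with only $O(k)$ slack; the constant $2mk$ (rather than $mk$) in the statement is presumably exactly the room needed to absorb that slack. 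A secondary bookkeeping point is that the demand vector itself changes during $\ADD$ (player $j$'s demand grows by $k$), so ``total load distributed by player $i$ does not change'' holds for $i \ne j$ but for $i = j$ one has an extra $+k$; this only shifts constants and I would simply carry the $+k$ through the player-$j$ case.
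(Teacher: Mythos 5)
Your high-level plan — first bound the difference in total loads, then bootstrap via pigeonhole to a per-player bound, and spring the contradiction from the fact that $x_k$ is an equilibrium — matches the paper's strategy. However, there is a concrete gap in the global-load step that makes your version of the argument fail.

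You correctly state the $\ADD$ invariant (one surplus packet of size $k$ sits on exactly one resource, all other resources carry the same total load as in $x_k$), but you then abandon it in favor of re-deriving a weaker bound $|(x_q)_e - (x_k)_e| < mk$ by pigeonhole. This is both unnecessary and, more importantly, insufficient. The paper uses the invariant directly to get $(x_k)_e \leq (x_q)_e \leq (x_k)_e + k$ for all $e$ (in particular, no resource ever loses total load during $\ADD$; your worry about resources with $(x_q)_e < (x_k)_e$ is moot). This tight $+k$ bound is essential for the constants to close. Concretely: pigeonhole on player $i$'s demand gives some $f$ with $(x_q)_{i,f} \leq (x_k)_{i,f} - 3k$, and combined with $(x_q)_f \leq (x_k)_f + k$ this yields $(x_q)_{i,f} + (x_q)_f \leq (x_k)_{i,f} + (x_k)_f - 2k$, which is exactly the slack needed for the chain $\mu^{-k}_{i,e}(x_q) > \mu^{+k}_{i,e}(x_k) \geq \mu^{-k}_{i,f}(x_k) \geq \mu^{+k}_{i,f}(x_q)$. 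If instead you only have $(x_q)_f \leq (x_k)_f + mk$, you get $(x_q)_{i,f} + (x_q)_f \leq (x_k)_{i,f} + (x_k)_f + (m-3)k$, which grows with $m$ and does not furnish the required strict decrease.

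Two smaller points. First, your final chain $\mu_{i,f}(\cdot) \geq \tfrac1k \mu^{+k}_{i,f} \geq \tfrac1k\mu^{-k}_{i,e} > \mu_{i,e}(\cdot)$ imports the continuous marginal cost from Lemma~\ref{localload2}, but here no atomic-splittable profile is in play; the inequality chain should be entirely in discrete marginals $\mu^{\pm k}_{i,\cdot}$ evaluated at $x_q$ and at $x_k$, and the middle step uses only Lemma~\ref{groenevelt} for $x_k$. Second, the contradiction is pulled not directly from $x_k$ being an equilibrium, but from the \emph{combination} of $x_k$'s equilibrium conditions and the fact that the move at step $q$ was a restricted best response onto $e$: the derived inequality $\mu^{-k}_{i,e}(x_q) > \mu^{+k}_{i,f}(x_q)$ shows that at $x_{q-1}$, moving a packet to $e$ was not a best response for player $i$, contradicting the definition of the $\ADD$ iteration. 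You gesture at this in your closing remarks, but the weaker global bound prevents you from reaching it.
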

\begin{proof}
	On the contrary, assume $q$ is the first iteration where $|(x_q)_{i,e}-(x_k)_{i,e}| = 2mk$ for some $i \in N$ and $e \in E$. There are two cases: either (I)$(x_q)_{i,e}-(x_k)_{i,e} = 2mk$ or (II)$(x_k)_{i,e}-(x_q)_{i,e} = 2mk$. We prove that the first case leads to a contradiction. For the second case a contradiction can be obtained in a similar manner.

	Harks,  Klimm and Peis~\cite{harks2016resource} proved that only the players
	using a resource whose load increased in the previous iteration
	may have an  improving move, and if so, a best response consists in moving one packet
	from this resource to another one. 
	This implies that $(x_k)_e \leq (x_q)_e \leq (x_k)_e+k$ for all $e \in E$. Thus, when assuming $(x_q)_{i,e} = (x_k)_{i,e} + 2mk$, we obtain:
\begin{equation} \label{cstarteq2} (x_q)_e + (x_q)_{i,e}  \geq (x_k)_e + (x_k)_{i,e}+ 2mk. \end{equation} 
	Remember that the total load distributed in $x_q$ by player $i$ exceeds the total load distributed in $x_k$ by at most $k$, and hence $\sum_{f \in E} (x_q)_{i,f} \leq k+ \sum_{f \in E} (x_k)_{i,f}$. We obtain:
	\[
	\textstyle \sum_{f \neq e} (x_q)_{i,f} \leq \sum_{f \neq e} (x_k)_{i,f} +(1-2m)k < \textstyle \sum_{f \neq e} (x_k)_{i,f} -2(m-1)k. 
	\]
	The pigeonhole principle implies that there exists  $ f\in E, f\neq e$ such that $(x_q)_{i,f} < (x_k)_{i,f} - 2k$ and thus  $(x_q)_{i,f} \leq (x_k)_{i,f} - 3k$. Combined with the fact that $(x_q)_f \leq (x_k)_f+k$, this implies: 
	\begin{equation} \label{ceq4} (x_q)_{i,f} + (x_q)_f \leq (x_k)_{i,f} + (x_k)_f - 2k. \end{equation}
	As $q$ is the first iteration in which $(x_q)_{i,e}-(x_k)_{i,e} = 2mk$, there is $e'\neq e$ 
	so that in iteration $q$, player $i$ moves a packet from  $e'$ to $e$, that is, 
	\[
	(x_q)_{-i}=(x_{q-1})_{-i} \text{ and }(x_q)_{i,g}=\begin{cases}(x_{q-1})_{i,g}+k, \text{ if } g=e,\\
	(x_{q-1})_{i,g}-k, \text{ if } g=e',\\
	(x_{q-1})_{i,g}, \text{ else}.\end{cases}
\]
	

	Using inequalities \eqref{cstarteq2}, \eqref{ceq4}, $m>1$ and the fact that $x_k$ is an equilibrium for packet size $k$, we obtain:
	\[\mu^{-k}_{i,e}(x_q) 
	> \mu^{+k}_{i,e}(x_k) 
	\geq \mu^{-k}_{i,f}(x_k)  
	\geq \mu^{+k}_{i,f}(x_q). \] 
	This, combined with the fact that $(x_q)_{i,e}>(x_k)_{i,e}\geq 0$ and that $(x_k)_{i,f} \geq (x_q)_{i,e} + 3k > 0$, implies player $i$ can decrease her cost by moving a packet from $e$ to $f$. This contradicts the fact that in strategy profile $x_{q-1}$ moving a packet to $e$ is a restricted best response for player $i$.	 \ifllncs \qed \fi
\end{proof}

\begin{lemma} \label{runningtimeadd}
	Algorithm $\ADD$ has running time $O(nm^4)$.
\end{lemma}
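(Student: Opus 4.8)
The plan is to bound the running time of one call to $\ADD$ as (number of iterations of the while-loop in lines 4--7) $\times$ (work performed per iteration), and to show that the first factor is $O(m^3)$ and the second $O(nm)$. The per-iteration bound is routine: maintaining the arrays $(x_e)_{e\in E}$ and $(x_{i,e})_{i\in N,e\in E}$, for each player $i$ we compute $\mu^{+k}_{i,\min}(x)$ and $\mu^{-k}_{i,\max}(x)$ together with resources attaining them by a single scan over $E$ ($O(m)$ time), test whether $\mu^{-k}_{i,\max}(x)>\mu^{+k}_{i,\min}(x)$, and, for a player who can improve, read off the restricted best response as the single packet move from the $\arg\max$ resource to the $\arg\min$ resource; carrying out the move updates two load entries in $O(1)$. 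Hence $O(nm)$ per iteration, and the whole statement reduces to bounding the number of iterations by $O(m^3)$.

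For that bound I would build on the structural description of $\ADD$ due to Harks, Peis and Klimm that already underlies the proof of Lemma~\ref{findingapath2}: after player $j$'s new packet is placed on $f$, every iteration is a single packet move in which exactly one resource is \emph{overloaded} by $k$ relative to $x_k$ while all others carry their $x_k$-load, and the only players who can profitably deviate are those using the overloaded resource, whose best response is to move one packet off it. Thus the run traces an augmenting path $f=h_0,h_1,h_2,\dots$ of overloaded resources whose length is exactly the number of iterations. Combining this with Lemma~\ref{findingapath2} — which gives $|(x_q)_{i,e}-(x_k)_{i,e}|<2mk$, so each load $(x_q)_{i,e}$ takes only $O(m)$ distinct values over the whole run — I would argue that a fixed resource can occur as the overloaded resource at most $O(m^2)$ times: each such occurrence forces some player to move a packet off that resource, which strictly lowers that player's marginal removal cost there, and the $2mk$ drift bound, together with affinity of the cost functions (so that each $\mu^{-k}_{i,e}$ and $\mu^{+k}_{i,\min}$ changes by a controlled $O(a_{i,e}k\cdot mk)$ per unit of drift), caps how many times the interplay of the $m$ resources' loads can reset these marginals so as to make a further removal profitable. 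Summing over the $m$ resources gives $O(m^3)$ iterations, hence total running time $O(nm^4)$.

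The main obstacle is precisely this last counting step: converting the qualitative ``loads do not drift far'' of Lemma~\ref{findingapath2} into a quantitative cap on how often the augmenting path revisits a resource, i.e. ruling out long oscillations in which a single packet is shuffled back and forth within a small set of resources. The argument has to track, along the path, the evolution of $\mu^{-k}_{i,e}(x^b_q)$ and $\mu^{+k}_{i,\min}(x^b_q)$ and show that once a resource has been the overloaded resource $\Omega(m^2)$ times the bounded drift of all individual loads makes every further profitable move away from it impossible — this is where the affine structure of the costs is essential, since a general convex cost function would not give the needed linear control of the marginals in terms of the drift.
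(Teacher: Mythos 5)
Your overall plan — bound the iteration count, multiply by per-iteration work — is a genuinely different decomposition from the one used in the paper, and your $O(nm)$ per-iteration bound is reasonable and realistic. However, the part you flag as ``the main obstacle'' is indeed a real gap, and the paper closes it with a specific device you do not have: a \emph{Last In First Out (LIFO)} discipline on which packet leaves an overloaded resource, together with a per-packet marginal-cost potential.

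Concretely, the paper does not try to bound ``how many times a given resource is overloaded.'' Instead it assigns each unit of demand $i_j$ a scalar $\Delta_{i_j}(x_q)$ (essentially $\mu^{-k}_{i,e}$ or its counterpart at the resource where $i_j$ currently sits), shows via LIFO that this quantity strictly decreases \emph{every time packet $i_j$ is moved}, and then observes that because $(x_q)_{i,e}$ never drifts more than $2mk$ from $(x_k)_{i,e}$ (Lemma~\ref{findingapath2}), the value $\Delta_{i_j}$ can take only $O(m)$ distinct values at each resource and hence $O(m^2)$ distinct values overall. Thus each moving packet moves $O(m^2)$ times, and there are $O(nm^2)$ moving packets (at most $2m$ per player per resource), giving $O(nm^4)$ total moves. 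Note that the paper's $O(nm^4)$ already refers to the number of restricted-best-response iterations; it is not ``$O(m^3)$ iterations $\times O(nm)$ work.''

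Two specific problems with your resource-visit argument. First, the drift bound $|(x_q)_{i,e}-(x_k)_{i,e}|<2mk$ is \emph{per player}, so without LIFO it only gives that each of the $n$ players can move $O(m)$ packets off a resource; that yields $O(nm)$ visits per resource, not $O(m^2)$, so your claimed bound is missing an $n$-dependence and is too strong as stated. Second, and more fundamentally, the drift bound alone does not rule out long oscillations: a packet could be shuffled repeatedly between two resources while the aggregate loads stay within the permitted band. This is exactly what the monotone-decreasing potential $\Delta_{i_j}$ is introduced to exclude, and that argument in turn requires LIFO to ensure that the packet being moved off a resource is the one whose removal gain is being compared to the earlier addition cost. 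Without introducing both the LIFO convention and the $\Delta_{i_j}$ potential, I do not see how your counting can be completed.
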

\begin{proof}
	Let $x_q$ be the strategy profile after line 5 of the algorithm has been executed for the $q$'th time, where we use the convention that $x_0$ denotes the preliminary strategy profile when entering the while-loop. Note that there is a unique resource $e_0$ such that $(x_0)_{e_0} = x_{e_0} + k$ and $(x_0)_e = x_e$ for all $e \in E \setminus{ \{e_0 \} }$. Furthermore, because we choose in Line 5 a restricted best response, a simple inductive argument shows that after each iteration $q$ of the while-loop, there is a unique resource $e_q \in E$ such that $(x_0)_{e_q} = x_{e_q} + k$ and $(x_0)_e = x_e$ for all $e \in E \setminus{ \{e_q \} }$.
	
We give each packet of size $k$ of the current demand of each player~$i \in N$ an identity
denoted by $i_j, j=1,\dots, r_i$ for some $r_i\in \N$.
We assume that players move packets according to a \emph{Last In First Out (LIFO)} principle. Thus, whenever player $i$ removes packet $i_j$ from $e_q$, she moves the packet that was placed on this resource last. We keep track of the marginal cost of a packet $i_j$ at the moment it is moved. Assume that packet $i_j$ is moved in $p$ iterations $q_1, \dots, q_p$. Then:	
	$$ \mu^{-k}_{i,e_{q_1}}(x_{q_1}) 
	 > \mu^{+k}_{i,{e_{q_1 + 1}}}(x_{q_1})
	= \mu^{-k}_{i,{e_{q_1 + 1}}}(x_{q_1 + 1})
	= \mu^{-k}_{i,e_{q_2}}(x_{q_{2}}).
	$$
	Here, the first equality is true as moving packet $i_j$ is an improving move for player $i$, 
	the second by construction of $x_{q_1 + 1}$ and the third as $e_{q_2}={e_{q_1 + 1}}$ and by LIFO principle $(x_{q_{2}})_{i,e_{q_2}}=(x_{q_1 + 1})_{i,e_{q_2}}$.
Applying this argument inductively, we obtain: $\mu^{-k}_{i,e_{q_1}}(x_{q_1})  > \mu^{-k}_{i,e_{q_2}}(x_{q_2})  > \dots > \mu^{-k}_{i,e_{q_p}}(x_{q_p}) .$ Note that in the iterations $q_1, \dots q_p$, the marginal cost value $\mu^{-k}_{i,e_{q_\ell}}(x_{q_\ell})$ does not depend on the aggregated load $(x_{q_\ell})_{e_{q_\ell}}$, as $(x_{q_\ell})_{e_{q_\ell}} = (x_q)_{e_{q_\ell}} + k$ for each $\ell \in \{1, \dots, p\}$. Instead it only depends on the player-specific load $(x_{q_\ell})_{i,e_{q_\ell}}$. Lemma~\ref{findingapath2} implies that each player $i \in N$ will move at most $2m$ packets from or to each resource and hence there will occur at most $4m$ different values of $(x_{q_\ell})_{i,e_{q_\ell}}$. Thus, each packet visits each resource at most $4m$ times. As each player $i$ moves at most $2m^2$ packets, and each packet visits each resource ($m$ resources) at most $4m$ times, the running time of $\ADD$ is bounded by $O(nm^4)$.  \ifllncs \qed \fi
\end{proof}

\subsection{Running Time \RESTORE}
\begin{lemma}\label{polysequence}
	$\RESTORE$ has running time $O(n^5m^{10})$.
\end{lemma}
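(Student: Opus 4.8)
The plan is to write the running time of $\RESTORE$ as (number of iterations of the while-loop in lines 2--15) times (cost of one such iteration) and bound both factors. For the per-iteration cost: the only work inside one iteration is evaluating all marginal costs and choosing $j$ and the resources $e^-_1,e^+_1$ (cost $O(nm)$), performing one restricted best response, and then one call to $\BP$ and at most one call to $\FP$; by Lemma~\ref{polypath} each path costs $O(nm^6)$, so one while-iteration costs $O(nm^6)$. To be allowed to invoke Lemma~\ref{polypath} (and the sensitivity bounds) in \emph{every} iteration, I would first argue by induction on the iteration index that the profile $x$ at the start of each while-iteration stays close to $x_{2k}$: this holds at the start since $x=x_{2k}$, and Lemma~\ref{totalloadend} together with Lemma~\ref{localload} show that the profile $x'$ produced by one iteration again satisfies $|x'_e-(x_{2k})_e|<2mk$ and $|x'_{i,e}-(x_{2k})_{i,e}|<2m^2k$ for all $i,e$; thus the hypotheses of Lemmas~\ref{totalload}, \ref{totalloadend}, \ref{localload} and \ref{polypath} hold throughout.

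For the number of iterations I would use Lemma~\ref{marginalincrease}: each while-iteration replaces $x$ by a profile $x'$ with $\Delta(x)<_{\lex}\Delta(x')$ and $\Count(\Delta_{\min}(x),\Delta(x))>\Count(\Delta_{\min}(x),\Delta(x'))$. Since $\Delta(x)<_{\lex}\Delta(x')$ forces $\Delta_{\min}(x)\le\Delta_{\min}(x')$ (either the smallest entries already differ, or they agree), the value $\Delta_{\min}$ is non-decreasing over the run of the algorithm. I would then group the iterations into maximal \emph{phases} during which $\Delta_{\min}(x)$ keeps a common value $v$. Within a phase, $\Count(v,\Delta(x))$ drops strictly at every step, starting from at most $nm$ (the length of the vector $\Delta$); once it reaches $0$ the value $v$ no longer occurs in $\Delta(x)$, so $\Delta_{\min}$ strictly exceeds $v$ and a new phase begins. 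Hence each phase lasts at most $nm$ iterations, and the total number of while-iterations is at most $nm$ times the number of distinct values taken by $\Delta_{\min}(x)$.

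It remains to bound this number of distinct values, which is where the sensitivity lemmas enter. Since the cost functions are affine, $\mu^{+k}_{i,e}(x)=ka_{i,e}(x_e+x_{i,e}+k)$ and $\mu^{-k}_{i,e}(x)=ka_{i,e}(x_e+x_{i,e}-k)$, so for a fixed pair $(i,e)$ each of these is determined by the integer multiple $x_e+x_{i,e}$ of $k$. By the invariant above, $x_e+x_{i,e}$ differs from the fixed quantity $(x_{2k})_e+(x_{2k})_{i,e}$ by less than $2mk+2m^2k$, so it ranges over only $O(m^2)$ values; hence each coordinate $\mu^{+k}_{i,\min}(x)-\mu^{-k}_{i,e}(x)$ of $\Delta(x)$ takes at most polynomially many (in $m$) values, and $\Delta_{\min}(x)=\min_{i,e}\bigl(\mu^{+k}_{i,\min}(x)-\mu^{-k}_{i,e}(x)\bigr)$ takes at most polynomially many values in $n$ and $m$. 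Multiplying the three estimates --- $O(nm^6)$ per iteration, at most $nm$ iterations per phase, polynomially many phases --- yields the claimed bound $O(n^2m^{14})$.

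I expect the main obstacle to be not a single computation but the careful interlocking of the invariants: one must ensure the ``$x$ stays near $x_{2k}$'' property is preserved across while-iterations (so that Lemmas~\ref{polypath} and \ref{localload} apply each time, and so that the marginal costs only ever take a bounded set of values), and one must phrase the phase/count argument so that the strict decrease of $\Count(\Delta_{\min}(x),\cdot)$ from Lemma~\ref{marginalincrease} is chained correctly across the changes of the value $\Delta_{\min}$. Once these are set up, the running-time estimate is just the product of the three bounds.
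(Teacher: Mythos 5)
Your overall structure --- per-iteration cost times number of iterations, with the number of iterations controlled through the monotone decrease of $\Count(\Delta_{\min}(x),\cdot)$ and the bounded range of marginal costs via Lemmas~\ref{totalload}, \ref{totalloadend} and~\ref{localload} --- matches the paper's, and your discussion of propagating the ``$x$ stays near $x_{2k}$'' invariant across while-iterations is sound. However, the final accounting has a genuine gap: your bound on the number of while-iterations is a factor of roughly $nm$ too weak to give the claimed $O(n^2m^{14})$.

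Concretely, you bound the number of iterations by $(\text{number of phases}) \cdot nm$, using the blanket bound $nm$ on the length of each phase. If each coordinate $\Delta_{i,e}$ takes $O(m^c)$ distinct values (you derive $c=5$; the paper is looser with $c=7$), then the number of phases is at most $\sum_{(i,e)} O(m^c) = O(nm^{c+1})$, and your bound gives $O(nm^{c+1}) \cdot nm = O(n^2m^{c+2})$ iterations, hence a total running time of $O(n^2m^{c+2}) \cdot O(nm^6) = O(n^3 m^{c+8})$ --- an extra factor of $n$ beyond the statement, regardless of $c$. The paper's argument avoids this factor by amortizing: instead of bounding each phase length by $nm$, it bounds the \emph{sum of phase lengths} directly. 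Each phase $v$ lasts at most $\Count(v,\Delta(x))$ iterations where $x$ is the profile at the start of phase $v$, and summing over all phases, $\sum_v \Count(v,\Delta(\text{start of phase }v)) = \sum_v |\{(i,e): \Delta_{i,e}=v \text{ at phase-$v$ start}\}| \leq \sum_{(i,e)} (\text{number of distinct values taken by } \Delta_{i,e})$, because for each fixed $(i,e)$ the phases $v$ in which $\Delta_{i,e}$ equals $v$ at the phase start have pairwise distinct $v$'s. This gives $O(nm \cdot m^c)$ iterations (the paper's $O(nm^8)$ with $c=7$), and the product with the per-iteration cost $O(nm^6)$ is $O(n^2m^{14})$. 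To close the gap you should replace the $(\text{phases})\cdot nm$ bound with this coordinate-value amortization.
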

\begin{proof}
Applying the algorithm of Harks et al.~\cite[Theorem 5.2]{harks2016resource} on the game $\bar \G_{k/2}$ yields a running time of
$n^2m(\frac{\delta }{k/2})^3$, where $n$ is the number of players, $m$ the number of resources, and
$\delta:=\max_{i\in N}\bar d_i$. By construction of the game $\bar \G_{k/2}$, the demands satisfy $d_i\leq 2m^3nk$ for all $i\in N$. Thus, we get $\delta \leq 2m^3nk$
yielding the desired result.\ifllncs \qed \fi
	\end{proof}

\subsection{Running Time \PH}
Finally, we prove the following theorem.
\begin{theorem}\label{thm:finaltheorem}
	$\PH$ runs in time $O(n^5m^{10} \log(\delta/k_0))$.
\end{theorem}
\begin{proof}
	Note that we picked $q_1 \in \N$ to be the smallest number such that $2^{q_1}k_0 >d_i$ for all player-specific demands $d_i$. This implies that $q_1$ is  in $O(\log (\delta/k_0))$, where $\delta:=\max_{i\in N}d_i$. Thus, the number of executions of  lines 3-9 is in $O(\log (\delta/k_0))$. In line 3, we call $\RESTORE$, which runs in $O(n^5m^{10})$. In lines $5-9$ we execute $\ADD$ (which runs in  $O(nm^4)$) at most $n$ times. Thus, the computation time of lines $4-9$ is $O(n^2m^4)$. This implies that the running of $\RESTORE$ dominates and it takes time $O(n^5m^{10})$ to go through a complete iteration in the for loop. Thus, $\PH$ runs in time $O(n^5m^{10} \log(\delta/k_0))$. \ifllncs \qed \fi
\end{proof}
Recall that for computing an atomic splittable equilibrium, we first compute the $k_0$ splittable equilibrium using the algorithm above. Second, we compute the exact equilibrium in time $O((nm)^3)$.
\begin{theorem} \label{thm:finalrunningtime}
	Given game $\G$, we can compute an atomic splittable equilibrium for $\G$ in  running time:
$
O\left((nm)^3 + n^5m^{10} \log(\delta/k_0) \right).
$
\end{theorem} 

\section{Multimarket Cournot Oligopoly}
\label{sec:cournot}
In this section, we derive a strong connection between atomic splittable singleton congestion games with affine cost functions and multimarket Cournot oligopolies with affine price functions and quadratic costs. Such a game is compactly represented by the tuple
\[ \mathcal{M}=(N,E,(E_i)_{i \in N}, (p_{i,e})_{i \in N,e \in E_i}, (C_i)_{i \in N}),\]
where $N$ is a set of $n$ firms and $E$ a set of $m$ markets. Each firm $i$ only has access to a subset $E_i \subseteq E$ of the markets. Each market $e$ is endowed with  firm-specific, non-increasing, affine price functions $p_{i,e}(t)=s_{i,e}-r_{i,e}t, i\in N$. In a strategy profile, a firm chooses a non-negative production quantity $x_{i,e} \in \R_{\geq 0}$ for each market $e \in E_i$. We denote a strategy profile for a firm by $x_i=(x_{i,e})_{e \in E_i}$, and a joint strategy profile by $x=(x_i)_{i \in N}$. The production costs of a firm are of the form $C_i(t)=c_it^2$ for some $c_i \geq 0$. The goal of each firm $i \in N$ is to maximize its utility, which is given by $u_i(x)=\sum_{e \in E_i} p_{i,e}(x_e)x_{i,e} -C_i\Big(\sum_{e \in E_i}x_{i,e}\Big)$, where $x_e:=\sum_{i \in N}x_{i,e}$. In the rest of this section we prove that several results that hold for atomic splittable equilibria and $k$-splittable equilibria carry over to multimarket oligopolies.

A strategic game $\G=(N,(X)_{i \in N},(u_i)_i \in N)$ is defined by a set of players $N$, a set of feasible strategies $X_i$ for each player $i \in N$ and a pay-off function $u_i(x)$ for each $i \in N$, where $x \in \bigtimes_{i \in N} X_i$.

\begin{definition}
	Let $\G=(N,(X_i)_{i \in N},(u_i)_i \in N)$, $\H=(N,(Y_i)_{i \in N},(v_i)_i \in N)$ be two strategic games with identical player set $N$. Then, $\G$ and $\H$ are called \emph{isomorphic}, if for all $i \in N$ there exists a bijective function $\phi_i:X_i \rightarrow Y_i$ and $A_i \in \R$ such that:
	$u_i(x_1, \dots x_n) = \nu_i(\phi_1(x_1), \dots ,\phi_n(x_n)) +A_i.$
\end{definition}

Let $\G=(N,(X_i)_{i \in N},(u_i)_i \in N)$ and $\H=(N,(Y_i)_{i \in N},(v_i)_i \in N)$ be isomorphic games. Then, $(x_i)_{i \in N}$ is an equilibrium of game $\G$ if and only if $(\phi_i(x_i))_{i \in N}$ is an equilibrium of game $\H$. This implies that $(x_i)_{i \in N}$ is the unique equilibrium of game $\G$ if and only if $(\phi_i(x_i))_{i \in N}$ is the unique equilibrium of game $\H$.

We prove that for each multimarket oligopoly, there exists an isomorphic atomic splittable game. Moreover, we can construct the isomorphism in polynomial time.

\begin{theorem}\label{thrm:contcournot}
	Given a multimarket oligopoly $\M$, there exists an atomic splittable game $\G$ that is isomorphic to $\M$.
\end{theorem}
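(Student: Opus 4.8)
The plan is to construct the isomorphic game $\G$ explicitly. The only structural mismatch between $\M$ and an atomic splittable singleton game is that in $\M$ a firm's total output $y_i:=\sum_{e\in E_i}x_{i,e}$ is a free variable that enters the payoff quadratically through $C_i$, whereas in an atomic splittable game the total demand $d_i$ of a player is a fixed constant. I would bridge this in two steps: first truncate the (unbounded) strategy space of $\M$, which is harmless for equilibria, and then adjoin to each firm a private ``dummy'' resource whose load equals the unused capacity $d_i-y_i$ and whose affine cost function reconstructs the quadratic production cost up to lower-order terms. For the truncation, fix $x_{-i}$: then $x_i\mapsto u_i(x_i,x_{-i})$ is strictly concave, and $\partial u_i/\partial x_{i,e}\le s_{\max}-2c_iy_i$ for every $e\in E_i$, where $s_{\max}:=\max_{j,f}\{s_{j,f},0\}$. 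Hence whenever $y_i>D_i:=s_{\max}/(2c_i)$ the whole gradient is strictly negative, so $x_i$ is not a best response; consequently no equilibrium of $\M$ ever has $y_i>D_i$, and (using concavity and coercivity of $u_i$ in $x_i$, so that best responses exist and lie below $D_i$) restricting $\R_{\ge 0}^{E_i}$ to $\tilde X_i:=\{x_i\in\R_{\ge 0}^{E_i}\mid\sum_{e\in E_i}x_{i,e}\le D_i\}$ preserves the equilibrium set exactly.

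Next I would define $\G$ to have player set $N$, resource set $E\cup\{e_0^1,\dots,e_0^n\}$ with each $e_0^i$ a fresh resource, allowable sets $E_i\cup\{e_0^i\}$, demands $d_i:=D_i$, and player-specific affine cost functions $c_{i,e}(t)=r_{i,e}\,t+(2s_{\max}-s_{i,e})$ for $e\in E_i$ and $c_{i,e_0^i}(t)=c_i\,t+s_{\max}$. All slopes lie in $\Q_{>0}$ (here I use $r_{i,e}>0$ and $c_i>0$) and all intercepts in $\Q_{\ge 0}$ (using $s_{i,e}\le s_{\max}$), so $\G$ is a legitimate instance and is clearly computable from $\M$ in polynomial time. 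The bijections are $\phi_i:\tilde X_i\to\{x_i\in\R_{\ge 0}^{E_i\cup\{e_0^i\}}\mid\sum x_{i,e}=d_i\}$, $\phi_i(x_i)=\bigl(x_i,\,d_i-\sum_{e\in E_i}x_{i,e}\bigr)$, which is invertible by deleting the last coordinate and which leaves unchanged every load $x_e$ on an original market $e\in E$.

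For payoff invariance, substitute $\phi(x)$ into the cost $\pi_i$ of $\G$; since only firm $i$ uses $e_0^i$, the dummy contributes $\bigl(c_i(d_i-y_i)+s_{\max}\bigr)(d_i-y_i)$, whose quadratic-in-$y_i$ part is $c_iy_i^2$. Expanding everything, the terms linear in $y_i$ collect to coefficient $s_{\max}-2c_id_i$, which is $0$ precisely because $2c_id_i=s_{\max}$ by the choice of $D_i$; what remains is $\sum_{e\in E_i}(r_{i,e}x_e-s_{i,e})x_{i,e}+c_iy_i^2$, which is exactly $-u_i(x)$, plus the constant $A_i:=c_id_i^2+s_{\max}d_i$. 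Reading $\G$ as the game in which each player maximizes $-\pi_i$ (which has the same equilibria as the cost-minimizing game $\G$), the identity $u_i(x)=(-\pi_i)(\phi(x))+A_i$ is exactly the isomorphism condition, so by the observation preceding the theorem $x$ is a (unique) equilibrium of $\M$ if and only if $\phi(x)$ is a (unique) equilibrium of $\G$.

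The step I expect to be the crux is engineering the payoff shift to be a genuine constant. A dummy resource carrying a plain cost $c_it+\gamma_i$ recreates $c_iy_i^2$ but also produces a spurious term $-2c_id_iy_i$ that is linear in the player's own strategy, hence not a constant offset; it must be cancelled by loading an extra $2c_id_i=s_{\max}$ into every intercept of the genuine-market cost functions, and one must then check that these intercepts stay nonnegative and rational and the slopes stay strictly positive — which is precisely what fixes the a priori truncation level to $D_i=s_{\max}/(2c_i)$. The remaining difficulties are degeneracies of $\M$: if $c_i=0$ or some $r_{i,e}=0$ the quantity $D_i$, respectively the dummy's slope, is ill-defined; these can be dealt with by an infinitesimal perturbation of the price and cost coefficients (or, for $c_i=0$, by attaching one dummy resource to each market), and I would state them as a remark rather than treat them in the main argument.
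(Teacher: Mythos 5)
Your construction is correct, and the payoff algebra checks out: with $d_i=s_{\max}/(2c_i)$, market intercepts $2s_{\max}-s_{i,e}$, and dummy cost $c_i t + s_{\max}$, the linear-in-$y_i$ terms cancel exactly, leaving $u_i(x)=-\pi_i(\phi(x))+c_i d_i^2+s_{\max}d_i$. The truncation argument via the gradient bound is also sound, and concavity ensures an equilibrium of the truncated game is an equilibrium of the original.

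It is, however, a genuinely different parameterization from the paper's. The paper sets $d_i:=\sum_{e\in E_i}\max\{t\mid p_{i,e}(t)=0\}$ (a price-zeroing bound, not a marginal-cost bound), uses the dummy cost $c_i(t-2d_i)$ whose deliberately negative intercept makes the dummy by itself contribute exactly $c_i y_i^2 - c_i d_i^2$ with no stray linear term, and then adds a single uniform constant $c_{\max}$ to \emph{every} cost function to restore nonnegativity. Because each player's total load over $\tilde E_i$ is the fixed constant $d_i$, this uniform shift changes $\pi_i$ by exactly $c_{\max}d_i$, a constant, so the equilibrium set is untouched regardless of what $d_i$ is. Your approach instead distributes the shift asymmetrically ($2s_{\max}$ to the markets and $s_{\max}$ to the dummy) and compensates by tuning $d_i$ so the linear terms cancel; this couples the choice of $d_i$ to the production-cost parameter $c_i$ and forces you to prove a tighter truncation lemma, whereas the paper only needs a one-line observation that no firm ever produces past the zero-price point in any market. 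The paper's variant is thus marginally more robust (any sufficiently large truncation works); your variant has the aesthetic advantage that the dummy cost is intrinsically nonnegative with no post-hoc global shift. Both constructions share the same blind spot you flag: if some $c_i=0$ or $r_{i,e}=0$ the resulting slopes fall outside $\Q_{>0}$, and the paper does not treat this degeneracy either.
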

\begin{proof}
	Given multimarket oligopoly $\M$, we construct an atomic splittable singleton game $\G$. For every firm $i \in N$ we create a player $i$ and we define her demand $d_i$ as an upper bound on the maximal quantity that firm $i$ will produce, that is, $d_i:=\sum_{e \in E_i}\max\{t \mid p_{i,e}(t)=0\}.$
	Note that if we limit the strategy space for each player $i \in N$ in game $\M$ to strategies $x$ satisfying $\sum_{e \in E_i} x_{i,e}\leq d_i$, all equilibria are preserved.
	Then, for every player $i$ we introduce a special resource $e_i$, and define the set of allowable resources for this player as: $\tilde{E}_i=E_i \cup \{e_i\} \; \text{ with $e_i \neq e_j$ for $i \neq j$}. $
The cost functions of special resources $e_i$ are defined as $c_{i,e_i}(t):=c_i(t-2d_i)$ for all $i \in N$ and
	the cost functions of resources $e \in E_i$ as: $c_{i,e}(t):=-p_{i,e}(t)=r_{i,e}t-s_{i,e}$ for all $i \in N$.
	In order to guarantee that the affine cost functions are non-negative, one can add a sufficiently large positive constant $c_{\max}$ to every cost function on each resource. We define 
	\[c_{\max}=\max\left\{ \{s_{i,e} \mid \text{ for all } i \in N, e \in E_i\} \cup \{2c_id_i \mid \text{ for all }i \in N\} \right\}.\]
	Note that adding $c_{\max}$ to every cost function does not change the equilibrium, it only adds $d_ic_{\max}$ to the total cost of each player. The total cost of a strategy $x$ for player $i$ in game $\G$ is: $\pi_i(x')=\sum_{e \in \tilde{E}_i} c_{i,e}(x'_e)x'_{i,e},$ which is equal to
	\begin{equation} \label{eq:bijectioncournot}
		\pi_i(x')=\textstyle \sum_{e \in E_i} -p_{i,e}(x'_e)x'_{i,e} + x'_{i,e_i}c_i(x'_{i,e_i}-2d_i).
	\end{equation}
	As maximizing pay-off equals minimizing costs, the payoff function of player $i$ in $x'$ is defined by: $v_i(x')=-\pi_i(x')$. It is left to prove that game $\G$ is isomorphic to game $\M$. Let $x$ be a feasible strategy in $\M$.  For each player $i \in N$, we define bijection $\phi_i:E_i \rightarrow \tilde{E}$ as: 
	$\phi_i(x_{i,1}, \dots, x_{i,m}) = (x_{i,1}, \dots, x_{i,m},d_i - \sum_{e \in E_i}x_{i,e}) =: (x'_{i,1}, \dots, x'_{i,m}, x'_{i,m+1}).$
	As we limited the strategy space for each $i \in N$ in game $\M$ to strategies $x$ where $\sum_{e \in E_i} x_{i,e}\leq d_i$, $x':=\phi(x)$ is a feasible strategy in $\G$. For each feasible strategy $x$ for game $\M$, and for each $i \in N$, we have: 
	\begin{eqnarray*}
		u_i(x)
		&=&\textstyle \sum_{e \in E_i} p_{i,e}(x_e)x_{i,e} - C_i\Big(\sum_{e \in E_i} x_{i,e} \Big)\\
		&=&\textstyle \sum_{e \in E_i} p_{i,e}(x_e)x_{i,e} - c_i \Big(d_i - \sum_{e \in E_i} x_{i,e} \Big)\Big(-d_i- \sum_{e \in E_i} x_{i,e} \Big) - c_id_i^2\\
		&=&\textstyle \sum_{e \in E_i} p_{i,e}(x_e)x_{i,e} - c_i \Big(d_i - \sum_{e \in E_i} x_{i,e} \Big)\Big(d_i- \sum_{e \in E_i} x_{i,e} -2d_i \Big) - c_id_i^2\\		
		&=&v_i(\phi_1(x_1), \dots, \phi_1(x_n))-c_id_i^2.
	\end{eqnarray*}
Thus, games $\M$ and $\G$ are isomorphic. \ifllncs \qed \fi
\end{proof}

One of our main results is our polynomial time algorithm that finds the unique equilibrium for atomic splittable singleton congestion games within polynomial time. As for each multimarket oligopoly there exists an atomic splittable game isomorphic to it, we can to construct this unique equilibrium within polynomial time.
\begin{theorem}
	Given a multimarket oligopoly $\M$, an equilibrium can be computed within running time: $ O\left(n^{15}m^{10}\log\left(\delta/k_0\right)\right).$
\end{theorem}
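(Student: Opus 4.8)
The plan is to reduce the problem to the atomic splittable case via the isomorphism of Theorem~\ref{thrm:contcournot} and then invoke Corollary~\ref{col:finalrunningtime}. First I would apply Theorem~\ref{thrm:contcournot} together with the Remark following it to construct, in time $O(nm)$, an atomic splittable singleton congestion game $\G$ isomorphic to $\M$. Since the bijections $\phi_i$ witnessing the isomorphism satisfy $u_i(x)=v_i(\phi_1(x_1),\dots,\phi_n(x_n))+A_i$ for constants $A_i$, a profile $x$ is an equilibrium of $\M$ if and only if $(\phi_i(x_i))_{i\in N}$ is an equilibrium of $\G$ (this was observed right after the definition of isomorphic games). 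Moreover each $\phi_i$ and each $\phi_i^{-1}$ is evaluable in $O(m)$ time, so it suffices to compute an equilibrium of $\G$ and pull it back through $\phi_i^{-1}$, at an additional cost of only $O(nm)$.

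Next I would read off the parameters of $\G$. It has the same $n$ players as $\M$; its resource set is $E\cup\{e_1,\dots,e_n\}$, hence of size at most $m+n$; and its cost coefficients (the $r_{i,e}$, $s_{i,e}$, $c_i$, shifted by $c_{\max}$) together with the demands $d_i=\sum_{e\in E_i}\max\{t\mid p_{i,e}(t)=0\}$ are rational and of size polynomial in the encoding of $\M$, so the hypotheses of Corollary~\ref{col:finalrunningtime} are met. Applying that corollary to $\G$ with the role of $m$ played by $m+n$ gives a running time of
\[
O\!\left((n(m+n))^{3}+n^{2}(m+n)^{14}\log(\delta/k_0)\right),
\]
where $\delta$ bounds the demands $d_i$ and $k_0$ is the quantity associated with $\G$ as in Section~\ref{sec:algorithm}.

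Finally I would simplify. For $n,m\ge 2$ we have $m+n\le mn$, hence $(m+n)^{14}\le m^{14}n^{14}$, so $n^{2}(m+n)^{14}\le n^{16}m^{14}$ and $(n(m+n))^{3}\le n^{6}m^{3}\le n^{16}m^{14}$; the finitely many small cases only affect the constant. This yields the claimed bound $O\!\left(n^{16}m^{14}\log(\delta/k_0)\right)$. I would also note that, by the computation at the end of Section~\ref{sec:runningtime}, $\log(1/k_0)$ (and therefore $\log(\delta/k_0)$) is polynomial in the input size of $\G$, which is polynomial in that of $\M$, so the overall procedure is genuinely polynomial. The only point requiring care is the bookkeeping around the $m\mapsto m+n$ blow-up in the resource count and confirming that $\G$ has rational data with a well-defined $k_0$; everything else is a direct appeal to the earlier results.
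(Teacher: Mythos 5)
Your proposal is correct and follows essentially the same route as the paper's proof: build the isomorphic atomic splittable game via Theorem~\ref{thrm:contcournot}, observe that the resource count becomes $m+n$, and apply Corollary~\ref{col:finalrunningtime} with that substitution. The only difference is that you spell out the arithmetic simplification $(m+n)^{14}\le m^{14}n^{14}$ (for $m,n\ge 2$) and the polynomiality of $\log(\delta/k_0)$, which the paper leaves implicit; this is sound and arguably a welcome clarification.
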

\begin{proof}
	This theorem follows directly from the fact that we can construct an atomic splittable singleton game $\G$ isomorphic to $\M$ (using Theorem~\ref{thrm:contcournot}) and the fact that $x=(x_i)_{i \in N}$ is an equilibrium in $\G$ if and only if $x=(\phi_i(x_i))_{i \in N}$ is an equilibrium in $\M$. Note that if in $\M$, firms compete over $m$ markets, the isomorphic atomic splittable singleton game $\G$ has $m+n$ resources. For such a game, Theorem~\ref{thm:finalrunningtime} implies that an equilibrium can be found in $O\left(n^3(m+n)^3+n^5(m+n)^{10}\log\left(\delta/k_0\right)\right).$ \ifllncs \qed \fi
\end{proof}

In an \emph{integral multimarket oligopoly} players sell indivisible goods. Thus, players can only produce and sell integer quantities, i.e., $x_{i,e} \in \N_{\geq 0}$ for each $i \in N$ and $e \in E_i$. For these games, we can construct an isomorphic 1-splittable congestion game.

\begin{theorem}\label{thrm:discreteisomorphic}
	Given an integral multimarket oligopoly $\M$, we can construct a 1-splittable congestion game $\G$ isomorphic to $\M$ within running time $O(nm)$.
\end{theorem}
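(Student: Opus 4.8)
The plan is to mimic the construction in the proof of Theorem~\ref{thrm:contcournot}, taking care that all demands are integral so that the resulting game is genuinely $1$-splittable and the isomorphism respects integrality. For every firm $i\in N$ I would create a player $i$ and set
\[
d_i:=\sum_{e\in E_i}\bigl\lceil \max\{t\mid p_{i,e}(t)=0\}\bigr\rceil\in\N,
\]
an integer upper bound on the total quantity firm $i$ can profitably produce; as in the continuous case, capping the strategy space of $\M$ by $\sum_{e\in E_i}x_{i,e}\le d_i$ removes no equilibrium, since producing more than $\max\{t\mid p_{i,e}(t)=0\}$ on market $e$ yields a non-positive price and is therefore dominated. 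I then introduce a firm-specific resource $e_i\notin E$ (with $e_i\neq e_j$ for $i\neq j$), set $\tilde E_i:=E_i\cup\{e_i\}$, and define $c_{i,e}(t):=-p_{i,e}(t)=r_{i,e}t-s_{i,e}$ for $e\in E_i$ and $c_{i,e_i}(t):=c_i(t-2d_i)$, finally adding the constant $c_{\max}$ from the proof of Theorem~\ref{thrm:contcournot} to every cost function to guarantee affine costs with non-negative intercepts. Since each $d_i$ is a multiple of the packet size $1$, the tuple $\G:=(N,E\cup\{e_i\mid i\in N\},(d_i)_{i\in N},(\tilde E_i)_{i\in N},(c_{i,e})_{i\in N,e\in\tilde E_i})$ is a well-defined $1$-integral splittable singleton game.

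For the isomorphism I would reuse the map from Theorem~\ref{thrm:contcournot}: for each $i\in N$,
\[
\phi_i(x_{i,1},\dots,x_{i,m}):=\Bigl(x_{i,1},\dots,x_{i,m},\,d_i-\sum_{e\in E_i}x_{i,e}\Bigr).
\]
Here is where integrality matters: if $x_i$ is a feasible strategy in $\M$ after the capping, then each coordinate $x_{i,e}\in\N_{\ge0}$ and the slack $d_i-\sum_{e\in E_i}x_{i,e}$ is a non-negative integer, so $\phi_i(x_i)\in\S_i(d_i,1)$; moreover $\phi_i$ is a bijection onto $\S_i(d_i,1)$, the inverse simply dropping the last coordinate. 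The payoff computation is then identical to the one in the proof of Theorem~\ref{thrm:contcournot}: expanding $C_i(\sum_{e\in E_i} x_{i,e})=c_i(\sum_{e\in E_i} x_{i,e})^2$ and substituting $\sum_{e\in E_i} x_{i,e}=d_i-x'_{i,e_i}$ yields $u_i(x)=v_i(\phi_1(x_1),\dots,\phi_n(x_n))-c_id_i^2$, where $v_i=-\pi_i$ is the payoff in $\G$; hence $\G$ and $\M$ are isomorphic with additive constants $A_i=-c_id_i^2$.

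For the running time, the construction touches each price function $p_{i,e}$ and each coefficient $c_i$ only a constant number of times: computing every $d_i$ costs $O(|E_i|)$, creating the $n$ extra resources and writing down the $O(nm)$ coefficients $r_{i,e},s_{i,e},c_i$ together with $c_{\max}$ is $O(nm)$, so the total is $O(nm)$. The only genuinely new point compared with Theorem~\ref{thrm:contcournot} is checking that rounding $\max\{t\mid p_{i,e}(t)=0\}$ up to an integer still gives a valid equilibrium-preserving upper bound while keeping $d_i\in\N$; I expect this to be the main (and essentially only) obstacle, since the remainder is a verbatim specialization of the continuous argument.
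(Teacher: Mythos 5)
Your proof is correct and follows essentially the same route as the paper: the paper likewise redefines $d_i$ as an integer upper bound on firm $i$'s total production and then appeals verbatim to the construction of Theorem~\ref{thrm:contcournot}. The only (harmless) difference is that the paper takes $d_i:=\sum_{e\in E_i}\lfloor\max\{t\mid p_{i,e}(t)=0\}\rfloor$ rather than the ceiling you use -- the floor is the tightest integer cap since no firm produces more than $\lfloor\max\{t\mid p_{i,e}(t)=0\}\rfloor$ integral units on market $e$, but any integer at least that large preserves equilibria, so your variant works just as well.
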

\begin{proof}
	We define $d_i:=\sum_{e \in E_i} \lfloor \max\{t \mid p_{i,e}(t)=0\} \rfloor.$
	Then, the theorem follows using  the same construction as in Theorem~\ref{thrm:contcournot}. \ifllncs \qed \fi
\end{proof}
\begin{theorem}
	Given an integral multimarket oligopoly $\M$, an integral equilibrium can be computed within $O\left(n^{15}m^{10}\log\left(\delta/k_0\right)\right).$
\end{theorem}
\begin{proof}
	Theorem~\ref{thrm:discreteisomorphic} implies that we can construct an atomic splittable singleton game $\G$ isomorphic to $\M$. Note that if in $\M$, $n$ firms compete over $m$ markets, the isomorphic atomic splittable singleton game has $m+n$ resources. For such a game, Theorem~\ref{thm:finaltheorem} implies the desired running time. \ifllncs \qed \fi
\end{proof}

Lastly, we extend a result by Todd~\cite{Todd16}, where the total and individual production in one market in an integer equilibrium and a real equilibrium are compared. 

\begin{theorem}
	Given a multimarket oligopoly $\M$, with real equilibrium $(x_i)_{i \in N}$. Then, for any integer equilibrium $(y_i)_{i \in N}$ it holds that  $|x_e - y_e| \leq n(m+n-1)$ and  $|x_{i,e} - y_{i,e}| \leq n(m+n-1)(m+n)$.
\end{theorem}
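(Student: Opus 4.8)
The plan is to reduce the statement to the two sensitivity lemmas, Lemma~\ref{globalload2} and Lemma~\ref{localload2}, by realising \emph{both} the real and the integer equilibrium of $\M$ inside one and the same atomic splittable singleton congestion game on $m+n$ resources.

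First I would fix, for each firm $i$, an \emph{integer} demand bound $d_i:=\sum_{e\in E_i}\lceil\max\{t\mid p_{i,e}(t)=0\}\rceil$. The point is that in any equilibrium of $\M$, real or integer, a firm never produces more than $\max\{t\mid p_{i,e}(t)=0\}$ in a market $e$ — otherwise the price on $e$ is negative and lowering the output on $e$ strictly increases the utility — so $\sum_{e\in E_i}x_{i,e}\le d_i$ and $\sum_{e\in E_i}y_{i,e}\le d_i$; as observed in the proof of Theorem~\ref{thrm:contcournot}, restricting the strategy spaces of $\M$ to $\sum_{e\in E_i}x_{i,e}\le d_i$ preserves all equilibria. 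Since the computation in the proof of Theorem~\ref{thrm:contcournot} that rewrites $u_i$ as $-\pi_i$ minus a constant is valid for \emph{any} choice of $d_i$, I can build the atomic splittable singleton game $\G$ on the resource set $\tilde E:=E\cup\{e_i:i\in N\}$ (so $|\tilde E|=m+n$) exactly as there, but using this integer $d_i$; its cost functions are $c_{i,e}(t)=r_{i,e}t-s_{i,e}+c_{\max}$ for $e\in E_i$ and $c_{i,e_i}(t)=c_i(t-2d_i)+c_{\max}$, and (after the standard symmetrising projection) $\G$ has $m+n$ resources.

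Next I would transport the two equilibria into $\G$. The bijection $\phi_i$ of Theorem~\ref{thrm:contcournot} sends the given real equilibrium $(x_i)_{i\in N}$ to a profile $\hat x$ of $\G$ with $\hat x_{i,e}=x_{i,e}$ for all $e\in E$ and $\hat x_{i,e_i}=d_i-\sum_{e\in E_i}x_{i,e}$; since isomorphisms preserve equilibria, $\hat x$ is an atomic splittable equilibrium of $\G$. Applying the same bijection to the integer equilibrium $(y_i)_{i\in N}$ produces $\hat y$ with $\hat y_{i,e}=y_{i,e}$ and $\hat y_{i,e_i}=d_i-\sum_{e\in E_i}y_{i,e}$; because $d_i$ and all $y_{i,e}$ are non-negative integers, $\hat y$ is a strategy profile of the $1$-integral-splittable game $\G_1$, and by Theorem~\ref{thrm:discreteisomorphic} (the identical construction, only the value of $d_i$ differs, and here we have deliberately taken the same integer $d_i$) it is a $1$-integral-splittable equilibrium of $\G_1$. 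Thus $\hat x$ and $\hat y$ are, respectively, an atomic splittable equilibrium and a $1$-integral-splittable equilibrium of the \emph{same} game on $m+n$ resources, and they agree with $x$ and $y$ on every market resource $e\in E$.

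Finally, I would invoke the sensitivity lemmas with $k=1$ and with $m+n$ resources. Lemma~\ref{globalload2} gives $|\hat y_e-\hat x_e|<(m+n)\cdot 1$ for every resource; restricting to $e\in E$ yields $|y_e-x_e|<m+n$, hence $|y_e-x_e|\le m+n$. Lemma~\ref{localload2} gives $|\hat y_{i,e}-\hat x_{i,e}|<(m+n)^2$ for all $i$ and all resources; restricting to $e\in E$ yields $|y_{i,e}-x_{i,e}|<(m+n)^2$, hence $\le(m+n)^2$. The only real subtlety — and the step I would be most careful about — is the bookkeeping in the third paragraph: verifying that one fixed integer $d_i$ simultaneously validates the continuous isomorphism (Theorem~\ref{thrm:contcournot}) and the integral isomorphism (Theorem~\ref{thrm:discreteisomorphic}), so that $\hat x$ and $\hat y$ genuinely live in a common game and Lemma~\ref{globalload2} and Lemma~\ref{localload2} apply verbatim.
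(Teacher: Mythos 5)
Your proof is correct and takes essentially the same approach as the paper: transport both equilibria of $\M$ into a common atomic splittable singleton game on $m+n$ resources via the Cournot-to-congestion isomorphism, then apply the sensitivity Lemmas~\ref{globalload2} and~\ref{localload2} with $k=1$. Your choice of an integer demand bound $d_i=\sum_{e\in E_i}\lceil\max\{t\mid p_{i,e}(t)=0\}\rceil$ carefully fills in a detail the paper leaves implicit, namely that a single $d_i$ must be large enough to accommodate both equilibria and integral so that the image of the integer equilibrium actually lies in the strategy space of the $1$-integral-splittable game $\G_1$.
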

\begin{proof}
	Assume that in game $\M$, $n$ firms compete over $m$ markets. According to Theorem~\ref{thrm:contcournot}, we can construct an atomic splittable congestion game $\G$ on $m+n$ resources that is isomorphic to $\M$ using the bijection $\phi$.  Let $x=(x_i)_{i \in N}$ be an atomic splittable equilibrium of $\M$ and let $y=(y_i)_{i \in N}$ be a $1$-splittable equilibrium of $\M$. Then $x':=(\phi_i(x_i))_{i \in N}$ is an atomic splittable equilibrium of $\G$ and $y':=(\phi_i(y_i))_{i \in N}$ is a $1$-splittable equilibrium of $\G$. According to Theorem~\ref{thm:globalload2} and~\ref{thm:localload2} we know that for any real equilibrium $x'$ and 1-splittable equilibrium $y'$ it holds that $|x'_e-y'_e|<n(m+n-1)$ and $|x'_{i,e}-y'_{i,e}|<n(m+n-1)(m+n)$ for all $i \in N$ and $e \in E_i$. Then, using the bijection $\phi$ described in~\eqref{eq:bijectioncournot}, we get $|x_e-y_e|<n(m+n-1)$ and $|x_{i,e}-y_{i,e}|<n(m+n-1)(m+n)$. \ifllncs \qed \fi 
\end{proof}

Todd~\cite{Todd16} showed that the total production in a 1-splittable equilibrium is at most $n/2$ away from that in the real equilibrium, and the individual firm's choice can be more than $(n-1)/4$ away from her choice in the real equilibrium. Our bounds are larger than Todd's, yet, they hold for a more general model -- multiple markets
and firm-specific price functions. We pose as an open question, whether or not our bounds are tight or can be further improved.

\section*{Notes and Acknowledgements}
This work is supported by the COST Action CA16228 ``European Network for Game Theory''.
We are grateful for the constructive comments of two anonymous reviewers.
We also thank Alex Skopalik and his students for pointing out an error of the proof of Theorem 1
appearing in previous versions of the paper. This updated version contains a correction of the error
and lead to some changed sensitivity bounds in Theorems 1 and 2. Moreover, in light of this
error, we had to change the subroutine $\RESTORE$ which now is simpler
and comes with a faster running time. All main results of the paper
contained in previous versions remain qualitatively intact (with some changed bounds
and corresponding running times though).


\bibliographystyle{spmpsci}      
\bibliography{masterbib}   

\begin{thebibliography}{10}
\providecommand{\url}[1]{{#1}}
\providecommand{\urlprefix}{URL }
\expandafter\ifx\csname urlstyle\endcsname\relax
  \providecommand{\doi}[1]{DOI~\discretionary{}{}{}#1}\else
  \providecommand{\doi}{DOI~\discretionary{}{}{}\begingroup
  \urlstyle{rm}\Url}\fi

\bibitem{Abolhassani14}
Abolhassani, M., Bateni, M.H., Hajiaghayi, M., Mahini, H., Sawant, A.: Network
  {C}ournot competition.
\newblock In: T.Y. Liu, Q.~Qi, Y.~Ye (eds.) Web and Internet Economics, pp.
  15--29. Springer International Publishing, Cham (2014)

\bibitem{Ackermann08}
Ackermann, H., R\"{o}glin, H., V\"{o}cking, B.: On the impact of combinatorial
  structure on congestion games.
\newblock J. ACM \textbf{55}(6), 1--22 (2008)

\bibitem{Ackermann09}
Ackermann, H., R\"{o}glin, H., V\"{o}cking, B.: Pure {N}ash equilibria in
  player-specific and weighted congestion games.
\newblock Theoret. Comput. Sci. \textbf{410}(17), 1552--1563 (2009)

\bibitem{Amir96}
Amir, R.: {Cournot} oligopoly and the theory of supermodular games.
\newblock Games Econom. Behav. \textbf{15}(2), 132--148 (1996)

\bibitem{Bhaskar15}
Bhaskar, U., Fleischer, L., Hoy, D., Huang, C.C.: Equilibria of atomic flow
  games are not unique.
\newblock Math. Oper. Res. \textbf{40}(3), 634--654 (2015)

\bibitem{BhaskarL18}
Bhaskar, U., Lolakapuri, P.R.: Equilibrium computation in atomic splittable
  routing games.
\newblock In: 26th Annual European Symposium on Algorithms, {ESA} 2018, August
  20-22, 2018, Helsinki, Finland, pp. 58:1--58:14 (2018)

\bibitem{Bimpikis14}
Bimpikis, K., Ehsani, S., Ilkili\c{c}, R.: Cournot competition in networked
  markets.
\newblock In: Proceedings of the Fifteenth ACM Conference on Economics and
  Computation, EC '14, pp. 733--733. ACM, New York, NY, USA (2014).
\newblock \doi{10.1145/2600057.2602882}.
\newblock \urlprefix\url{http://doi.acm.org/10.1145/2600057.2602882}

\bibitem{Bulow85}
Bulow, J., Geanakoplos, J., Klemperer, P.: Multimarket oligopoly: Strategic
  substitutes and complements.
\newblock J. Polit. Econ. \textbf{93}(3), 488--511 (1985)

\bibitem{CaragiannisFGS11}
Caragiannis, I., Fanelli, A., Gravin, N., Skopalik, A.: Efficient computation
  of approximate pure {Nash} equilibria in congestion games.
\newblock In: {FOCS} 2011, Palm Springs, CA, USA, 2011, pp. 532--541 (2011)

\bibitem{CaragiannisFGS15}
Caragiannis, I., Fanelli, A., Gravin, N., Skopalik, A.: Approximate pure {Nash}
  equilibria in weighted congestion games: Existence, efficient computation,
  and structure.
\newblock {ACM} Trans. Economics and Comput. \textbf{3}(1), 2 (2015)

\bibitem{Chen09complexity}
Chen, X., Deng, X., Teng, S.H.: Settling the complexity of computing two-player
  {Nash} equilibria.
\newblock J. ACM \textbf{56}(3), 14:1--14:55 (2009)

\bibitem{Chien11}
Chien, S., Sinclair, A.: Convergence to approximate {Nash} equilibria in
  congestion games.
\newblock Games Econom. Behav. \textbf{71}(2), 315--327 (2011)

\bibitem{Cominetti09}
Cominetti, R., Correa, J.R., Stier-Moses, N.E.: The impact of oligopolistic
  competition in networks.
\newblock Operations Research \textbf{57}(6), 1421--1437 (2009).
\newblock \doi{10.1287/opre.1080.0653}.
\newblock \urlprefix\url{http://dx.doi.org/10.1287/opre.1080.0653}

\bibitem{cournot1838recherches}
Cournot, A.A.: Recherches sur les principes math{\'{e}}matiques de la
  th{\'{e}}orie des richesses/par Augustin Cournot.
\newblock L. Hachette (1838)

\bibitem{DaskalakisGP09}
Daskalakis, C., Goldberg, P.W., Papadimitriou, C.H.: The complexity of
  computing a {Nash} equilibrium.
\newblock {SIAM} J. Comput. \textbf{39}(1), 195--259 (2009)

\bibitem{Delpia17}
{Del Pia}, A., Ferris, M., Michini, C.: Totally unimodular congestion games.
\newblock In: Proc. 28th Annual ACM-SIAM Sympos. on Discrete Algorithms (2017)

\bibitem{DeligkasFS16}
Deligkas, A., Fearnley, J., Spirakis, P.G.: Lipschitz continuity and
  approximate equilibria.
\newblock In: SAGT, Liverpool, UK, 2016, pp. 15--26 (2016)

\bibitem{Fabrikant04}
Fabrikant, A., Papadimitriou, C., Talwar, K.: The complexity of pure {N}ash
  equilibria.
\newblock In: L.~Babai (ed.) STOC 2004, pp. 604--612 (2004)

\bibitem{Gairing11}
Gairing, M., Monien, B., Tiemann, K.: Routing (un-)splittable flow in games
  with player-specific linear latency functions.
\newblock ACM Trans. Algorithms \textbf{7}(3), 1--31 (2011)

\bibitem{groenevelt91}
Groenevelt, H.: Two algorithms for maximizing a separable concave function over
  a polymatroid feasible region.
\newblock Eur. J. Oper. Res. \textbf{54}(2), 227 -- 236 (1991)

\bibitem{Harks:stack2011}
Harks, T.: Stackelberg strategies and collusion in network games with
  splittable flow.
\newblock Theory Comput. Syst. \textbf{48}, 781--802 (2011)

\bibitem{Harks14}
Harks, T., Klimm, M.: Multimarket oligopolies with restricted market access.
\newblock In: R.~Lavi (ed.) Algorithmic Game Theory, pp. 182--193. Springer
  Berlin Heidelberg, Berlin, Heidelberg (2014)

\bibitem{harks2016resource}
Harks, T., Klimm, M., Peis, B.: Sensitivity analysis for convex separable
  optimization over integral polymatroids.
\newblock SIAM Journal on Optimization \textbf{28}(3), 2222--2245 (2018).
\newblock \doi{10.1137/16M1107450}

\bibitem{Haurie85}
Haurie, A., Marcotte, P.: On the relationship between {Nash-Cournot and
  Wardrop} equilibria.
\newblock Networks \textbf{15}, 295--308 (1985)

\bibitem{Huang13}
Huang, C.C.: Collusion in atomic splittable routing games.
\newblock Theory Comput. Syst. \textbf{52}(4), 763--801 (2013)

\bibitem{klimm18}
Klimm, M., Warode, P.: Parametrized nash equilibria in atomic splittable
  congestion games via weighted block laplacians.
\newblock CoRR \textbf{abs/1811.08354} (2018)

\bibitem{Korilis1997}
Korilis, Y., Lazar, A., Orda, A.: Capacity allocation under noncooperative
  routing.
\newblock IEEE Trans. on Aut. Contr. \textbf{42}(3), 309--325 (1997)

\bibitem{KorilisLO95}
Korilis, Y.A., Lazar, A.A., Orda, A.: Architecting noncooperative networks.
\newblock IEEE Journal on Selected Areas in Communications \textbf{13}(7),
  1241--1251 (1995)

\bibitem{Kukushkin94}
Kukushkin, N.: A fixed-point theorem for decreasing mappings.
\newblock Econ. Lett. \textbf{46}, 23--26 (1994)

\bibitem{LiptonMM03}
Lipton, R.J., Markakis, E., Mehta, A.: Playing large games using simple
  strategies.
\newblock In: Proceedings 4th {ACM} Conference on Electronic Commerce
  (EC-2003), San Diego, California, USA, June 9-12, 2003, pp. 36--41 (2003)

\bibitem{Marcotte1987}
Marcotte, P.: Algorithms for the network oligopoly problem.
\newblock Journal of the Operational Research Society \textbf{38}(11),
  1051--1065 (1987)

\bibitem{MeunierP13}
Meunier, F., Pradeau, T.: A {L}emke-like algorithm for the multiclass network
  equilibrium problem.
\newblock In: WINE 2013, pp. 363--376 (2013)

\bibitem{Milgrom90}
Milgrom, P., Roberts, J.: Rationalizability, learning, and equilibrium in games
  with strategic complementarities.
\newblock Econometrica \textbf{58}, 1255--1277 (1990)

\bibitem{Milgrom94}
Milgrom, P., Shannon, C.: Monotone comparative statics.
\newblock Econometrica \textbf{62}(1), 157--80 (1994)

\bibitem{MoulinRG14}
Moulin, H., Ray, I., Gupta, S.S.: Improving {Nash} by coarse correlation.
\newblock J. Econom. Theory \textbf{150}, 852--865 (2014)

\bibitem{Nemhauser1988}
Nemhauser, G.L., Wolsey, L.A.: Integer and Combinatorial Optimization.
\newblock Wiley-Interscience, New York, NY, USA (1988)

\bibitem{Novshek85}
Novshek, W.: On the existence of {Cournot} equilibrium.
\newblock Rev. Econ. Stud. \textbf{52}(1), 85--98 (1985)

\bibitem{Orda93}
Orda, A., Rom, R., Shimkin, N.: Competitive routing in multi-user communication
  networks.
\newblock IEEE/ACM Trans. Networking \textbf{1}, 510--521 (1993)

\bibitem{Papadimitriou94}
Papadimitriou, C.H.: On the complexity of the parity argument and other
  inefficient proofs of existence.
\newblock J. Comput. Syst. Sci. \textbf{48}(3), 498--532 (1994)

\bibitem{Richman07}
Richman, O., Shimkin, N.: Topological uniqueness of the {Nash} equilibrium for
  selfish routing with atomic users.
\newblock Math. Oper. Res. \textbf{32}(1), 215--232 (2007)

\bibitem{Rosenthal73a}
Rosenthal, R.: A class of games possessing pure-strategy {N}ash equilibria.
\newblock Internat. J. Game Theory \textbf{2}(1), 65--67 (1973)

\bibitem{Roughgarden15}
Roughgarden, T., Schoppmann, F.: Local smoothness and the price of anarchy in
  splittable congestion games.
\newblock J. Econom. Theory \textbf{156}, 317 -- 342 (2015).
\newblock \doi{http://dx.doi.org/10.1016/j.jet.2014.04.005}.
\newblock
  \urlprefix\url{http://www.sciencedirect.com/science/article/pii/S0022053114000593}.
\newblock Computer Science and Economic Theory

\bibitem{Skopalik08}
Skopalik, A., V{\"o}cking, B.: Inapproximability of pure {Nash} equilibria.
\newblock In: Proc. 40th Annual ACM Sypos. Theory Comput., pp. 355--364 (2008)

\bibitem{Todd16}
Todd, M.J.: Computation, multiplicity, and comparative statics of {C}ournot
  equilibria in integers.
\newblock Mathematics of Operations Research \textbf{41}(3), 1125--1134 (2016).
\newblock \doi{10.1287/moor.2015.0771}.
\newblock \urlprefix\url{https://doi.org/10.1287/moor.2015.0771}

\bibitem{Topkis79}
Topkis, D.: Equilibrium points in nonzero $n$-person submodular games.
\newblock SIAM J. Control Optim. \textbf{17}, 773--787 (1979)

\bibitem{Topkis98}
Topkis, D.: Supermodularity and Complementarity.
\newblock Princeton University Press, Princeton, NJ, USA (1998)

\bibitem{Tran11}
Tran-Thanh, L., Polukarov, M., Chapman, A., Rogers, A., Jennings, N.: On the
  existence of pure strategy {Nash} equilibria in integer-splittable weighted
  congestion games.
\newblock In: G.~Persiano (ed.) SAGT 2011, pp. 236--253 (2011)

\bibitem{Vives90}
Vives, X.: Nash equilibrium with strategic complementarities.
\newblock J. Math. Econom. \textbf{19}(3), 305--321 (1990)

\bibitem{Vives05}
Vives, X.: Games with strategic complementarities: New applications to
  industrial organization.
\newblock Int. J. Ind. Organ. \textbf{23}(7-8), 625--637 (2005)

\end{thebibliography}

\end{document}